\newtheorem{theorem}{Theorem}[section]
\newtheorem{lemma}[theorem]{Lemma}
\theoremstyle{definition}
\newtheorem{definition}[theorem]{Definition}
\newcommand{\reals}{\mathbb{R}}
\newcommand{\dd}{\textrm{d}}
\newcommand{\ignore}[1]{}
\newcommand{\wec}[2]{$\textsc{TE}_d^{#1}\left(#2\right)$}
\newcommand{\we}[2]{$\textsc{ME}_d^{#1}\left(#2\right)$}
\newcommand{\ecb}[2]{$\textsc{EE}_{d}^{#1}\left(#2\right)$}
\newcommand{\naive}[1]{$\mathcal{N}^f_{#1}$}
\newcommand{\naivew}[1]{$\mathcal{N}_{#1}$}
\newcommand{\eenergyw}[1]{\mathcal E\left({#1}\right)}
\newcommand{\ttimew}[1]{\mathcal T \left( {#1} \right) }
\newcommand{\nlpw}[2]{\textsc{NLP}_{#1}^{#2} }
\newcommand{\nlpwbar}[1]{\textsc{NLP}_{#1}'}
\newcommand{\nlpwbarbar}[1]{\textsc{NLP}_{#1}''}
\begin{document}
\title{\bf 
Time-Energy Tradeoffs \\
for Evacuation by Two Robots \\
in the Wireless Model
\footnote{This is the full version of the paper with the same title which will appear in the proceedings of the 
26th International Colloquium on Structural Information and Communication Complexity (SIROCCO'19) L'Aquila, Italy during July 1-4, 2019.}
}

\author{
Jurek Czyzowicz\footnotemark[1] 
\and
Konstantinos Georgiou\footnotemark[2] 
\and
Ryan Killick\footnotemark[4] 
\and 
Evangelos Kranakis\footnotemark[4] 
\and
Danny Krizanc\footnotemark[7]
\and
Manuel Lafond\footnotemark[5] 
\and
Lata Narayanan\footnotemark[8] 
\and
Jaroslav Opatrny\footnotemark[8]
\and
Sunil Shende\footnotemark[9]
}

\def\thefootnote{\fnsymbol{footnote}}
\footnotetext[1]{
Universite du Québec en Outaouais, Gatineau, Qu\'{e}bec, Canada, \texttt{jurek.czyzowicz@uqo.ca}
}
\footnotetext[2]{
Dept. of Mathematics, 
Ryerson University, 
Toronto, ON, Canada, \texttt{konstantinos@ryerson.ca}
}
\footnotetext[4]{
School of Computer Science, Carleton University, Ottawa ON, Canada, \texttt{ryankillick,kranakis@scs.carleton.ca}
}
\footnotetext[7]{
Department of Mathematics \& Comp. Sci., Wesleyan University, Middletown, CT, USA, \texttt{dkrizanc@wesleyan.edu}
}
\footnotetext[8]{
Department of Comp. Sci. and Software Eng., Concordia University, Montreal, Qu\'{e}bec,  Canada, \texttt{lata,opatrny@encs.concordia.ca}
}
\footnotetext[9]{
Department of Computer Science, Rutgers University, Camden, USA,
\texttt{shende@camden.rutgers.edu}
}
\footnotetext[5]{
Department of Computer Science, Universit\'{e} de Sherbrooke, Qu\'{e}bec,  Canada
\texttt{Manuel.Lafond@usherbrooke.ca}
}
\maketitle

\begin{abstract}
Two robots stand at the origin of the infinite line and are tasked with searching collaboratively for an exit at an unknown location on the line. They can travel at maximum speed $b$ and can change speed or direction at any time. The two robots can communicate with each other at any distance and at any time. The task is completed when the last robot arrives at the exit and evacuates. We study time-energy tradeoffs for the above evacuation problem. The evacuation time is the time it takes the last robot to reach the exit. The energy it takes for a robot to travel a distance $x$ at speed $s$ is measured as $xs^2$. The total and makespan evacuation energies are respectively the sum and maximum of the energy consumption of the two robots while executing the evacuation algorithm. 

Assuming that the maximum speed is $b$, and the evacuation time is at most $cd$, where $d$ is the distance of the exit from the origin, we study the problem of minimizing the total energy consumption of the robots. We prove that the problem is solvable only for $bc \geq 3$.  For the case $bc=3$, we give an optimal algorithm, and give upper bounds on the energy for the case $bc>3$. 

We also consider the problem of minimizing the evacuation time when the available energy is bounded by $\Delta$. Surprisingly, when  $\Delta$ is a constant, independent of the distance $d$ of the exit from the origin, we prove that  evacuation is possible in time $O(d^{3/2}\log d)$, and this is optimal up to a logarithmic factor. When $\Delta$ is linear in $d$, we give upper bounds on the evacuation time. 

%
%

\vspace{0.5cm}
\noindent
{\bf Key words and phrases.}
Energy,
Evacuation,
Linear,
Robot,
Speed,
Time,
Trade-offs,
Wireless Communication.
\end{abstract}



\section{Introduction}

Linear search is an online problem in which a robot is  tasked with finding an exit placed at an unknown location on an infinite line. It has long been known that the classic doubling strategy, which guarantees a search time of $9d$ for an exit at distance $d$ from the initial location is optimal for a robot travelling at speed at most 1 (see any of the books~\cite{ahlswede1987search,alpern2003theory,stone1975theory} for additional variants, details and information). If even one more robot is allotted to the search then clearly an exit at distance $d$ can always be found in time $d$ by one of the robots. Therefore the problem of group search by multiple robots on the line is concerned with minimizing the time the {\em last} robot arrives at the exit; the problem is also called {\em evacuation}. It was first introduced as part of a study on cycle-search~\cite{CGGKMP} and further elaborated on an infinite line for multiple communicating robots with crash~\cite{PODC16} and Byzantine faults~\cite{isaacCzyzowiczGKKNOS16}.

The time taken for group search on the line clearly depends on the communication capabilities of the robots. In the wireless communication model, the robots can communicate at any time and over any distance. In the face-to-face communication model, the robots can only communicate when they are in the same place at the same time. A straightforward algorithm achieves evacuation time $3d$ in the wireless model, and can be seen to be optimal, while it has been shown that in the face-to-face model, two robots cannot achieve better evacuation time than one robot \cite{chrobak2015group}. 

In this paper, we consider the {\em energy} required for group search on the line. We use the energy model proposed in \cite{CGKKKLNOS19wireless} in which the energy consumption of a robot travelling a distance $x$ at speed $s$ is proportional to $xs^2$. This model is motivated by the concept of viscous drag in fluid dynamics; see Section~\ref{sec:prelims} for more details.  The authors of  ~\cite{CGKKKLNOS19wireless}, 
studied the question of the minimum energy required for group search on the line by two robots travelling at speed at most $b$ while guaranteeing that both robots reach the exit within time $cd$, where $d$ is the distance of the exit from the starting position of the robots. For the special case $b=1, c=9$, they proved the surprising result that two robots can evacuate with less energy than one robot, while taking the same evacuation time.

Our main approach throughout the paper is to investigate time-energy tradeoffs for group search by two robots in the wireless communication model. Assuming that the maximum speed is $b$, and the evacuation time is at most $cd$, where $d$ is the distance of the exit from the origin, we study the problem of minimizing the total energy consumption of the robots. 
We also consider the problem of minimizing the evacuation time when the available energy is bounded by $\Delta$.

\subsection{Model and problem definitions}
\label{sec:prelims}


Two robots are placed at the origin of an infinite line. An exit is located at unknown distance $d$ from the origin and can be found if and only if a robot walks over it.  A robot can change its direction or speed at any time, e.g., as a function of its distance from the origin, or the distance walked so far. Robots operate under the wireless model of communication in which messages can be transmitted between robots instantaneously at any distance.
Feasible solutions are robots' trajectories in which, eventually, both robots evacuate, i.e. they both reach the exit. Given a location of the exit, the time by which the second robot reaches the exit is referred to as the \textit{evacuation time}. We distinguish between constant-memory robots that can
only travel at a constant number of hard-wired speeds, and unbounded-memory robots that can dynamically compute speeds and distances, and travel at any possible speed.



The energy model being used throughout the paper is motivated from the concept of viscous drag in fluid dynamics \cite{batchelor2000}.  In particular, an object moving with constant speed $s$ will experience a {\em drag} force $F_{D}$ proportional\footnote{The constant of proportionality has (SI) units $kg/m$ and depends, among other things, on the shape of the object and the density of the fluid through which it moves.} to $s^2$. In order to maintain the speed $s$ over a distance $x$ the object must do work equal to the product of $F_D$ and $x$ resulting in a continuous energy loss proportional to the product of the object's squared speed and travel distance. For simplicity we take the proportionality constant to be one, and define the energy consumption moving at constant speed $s$ over a segment of length $x$ to be $xs^2$. We extend the definition of energy for a robot moving in the same direction from point $a$ to point $b$ on the line, using speed $s(x)\in \reals, x\in [a,b]$, as $\int_a^b s^2(x) \dd x$. 
The total energy of a specific robot traversing more intervals, possibly in different directions, is defined as the sum of the energies used in each interval.

Given a collection of robots, the \textit{total evacuation energy} is defined as the sum of the robots' energies used till both robots evacuate. 
Similarly, we define the \textit{makespan evacuation energy} as the maximum energy used by any of the two robots.

%
%
%
%
For each $d>0$ there are two possible locations for the exit to be at distance $d$ from the origin: we will refer to either of these as input instances $d$ for the group search problem. More specifically, we are interested in the following three optimization problems:

\begin{definition}
Problem \ecb{b}{c}: 
Minimize the  total evacuation energy, given that the  evacuation time is no more than $cd$ (for all instances $d$) and using speeds no more than $b$. 
\end{definition}




\begin{definition}
Problem \wec{b}{\Delta}: Minimize the  evacuation time, given that the total evacuation energy is no more than $\Delta$ (for all instances $d$), and using speeds at most $b$.
\end{definition}

\begin{definition}
Problem \we{b}{\Delta}: Minimize the  evacuation time, given that the makespan evacuation energy is no more than $\Delta$ (for all instances $d$), and using speeds at most $b$. 
\end{definition}

For the last two problems, we consider two cases when the evacuation energy $\Delta$ is a constant and when it is linear in $d$. 




\subsection{Our results }


Consider the following intuitive and simple algorithm for wireless evacuation, which is a parametrized version of a well-known algorithm for the case of unit speed robots that achieve evacuation time $3d$.
\begin{definition}[Algorithm Simple Wireless Search \naivew{s,r}]
\label{def: naive wireless algo}
Robots move at opposite directions with speed $s$ until the exit is found. The finder announces ``exit found'' and halts. The other robot changes direction and moves at speed $r$ until the exit is reached. 
\end{definition}

We analyze the behaviour of this algorithm for all three proposed problems, and determine the speeds that achieve the minimum evacuation energy (or time) among all algorithms of this class, while respecting the given bound on evacuation time (resp.  energy). In some cases, the algorithms derived are shown to be optimal. In particular, our main results are the following:

\begin{enumerate}

\item We show that the problem \ecb{b}{c} admits a solution if and only if $cb\geq 3$. Furthermore, for every $c, b > 0$ with $cb=3$, we show that  the optimal  total evacuation energy is $4b^2d$, and this is achieved by \naivew{s,r} with $s=r=b$ (Theorem~\ref{lower-bound}). 

\item For every $c, b> 0$ with $cb \geq 3$, we derive the optimal values of $s$ and $r$ for the algorithm \naivew{s,r} that minimize the total evacuation energy (Theorem~\ref{thm: optimal speeds naive bounded speeds wireless}). 

\item We observe that if total or makespan energy $\Delta$ is a constant,  problems \wec{b}{\Delta}  and \we{b}{\Delta}  cannot be solved by robots that can only use a finite number of speeds. We prove that if $\Delta$ is bounded by a constant, the optimal evacuation time is  $\Omega(d^{3/2})$  (see Theorem~\ref{thm: lower bound constant energy}). Somewhat surprisingly, we give an algorithm with total evacuation time $O(d^{3/2} \log d)$ (see Theorem~\ref{thm: upper bound constant max energy e<=1}); thus the algorithm is optimal up to a logarithmic factor. Our algorithm requires the robots to continuously change their speed at every distance $x$ from the origin. This is the only part that requires robots
  to have unbounded memory. 

\item For the problems \wec{b}{\Delta}  and \we{b}{\Delta}  with total or makespan energy  $\Delta= O(d)$ and $b=1$, we give upper bounds on the  total evacuation time (see Theorems~\ref{thm: upper bound constant max energy e<=1} and ~\ref{thm: optimal speeds naive wec makespan energy bound} respectively). 
\end{enumerate}
All proofs missing from the main text can be found in the appendix.

\subsection{Related Work} 

In group search, a set of communicating robots interact and co-operate by exchanging information in order to complete the task which usually involves finding an exit placed at an unknown location within a given search domain. Some of the pioneering results related to our work are concerned with search on an infinite domain, like a straight line~\cite{baezayates1993searching,beck1964linear,bellman1963optimal,kao1996searching},  while others with search on the perimeter of a closed domain like unit disk~\cite{CGGKMP} or equilateral triangle or square~\cite{CKKNOS}. The communication model being used may be  either wireless~\cite{CGGKMP} or F2F~\cite{Watten2017,CGKNOV,CKKNOS}. Search and evacuation problems with a combinatorial flavour have been recently considered in \cite{CGKKKNOS18a,CGKKKNOS18b} and search-and-fetch problems in~\cite{GeorgiouKK16,GeorgiouKK17}, while \cite{CGS18} studied average-case/worst-case trade-offs for a specific evacuation problem on the disk. The interested reader may also wish to consult a recent survey paper~\cite{CGK19search} on selected search and evacuation topics.

Traditional approaches to evaluating the performance of search have been mostly concerned with time. This is apparent in the book~\cite{alpern2003theory} and the research described in the seminal works on deterministic~\cite{baezayates1993searching}, stochastic~\cite{beck1964linear,bellman1963optimal} and randomized~\cite{kao1996searching} search and continued up to the most recent research papers on linear search for robots with terrain dependent speeds~~\cite{CzyzowiczKKNOS17} and robots with Byzantine~\cite{isaacCzyzowiczGKKNOS16} and crash fault behaviour~\cite{PODC16} (see also the survey paper~\cite{CGK19search}). Aside from the research by~\cite{demaine2006online}, in which the authors are looking at the turn cost when robots change direction during the search,  little or no research has been conducted on other measures of performance.

The first paper on search and evacuation to change this focus from optimizing the time to the energy consumption required to find the exit as well as to time/energy tradeoffs is due to~\cite{CGKKKLNOS19wireless}. The authors determine optimal (and in some cases nearly optimal) linear search algorithms inducing the lowest possible energy consumption and also propose a linear search algorithm that simultaneously achieves search time $9d$ and consumes energy $8.42588d$, for an exit located at distance $d$ unknown to the robots. However, the previously mentioned paper~\cite{CGKKKLNOS19wireless} differs from our present work in that the authors focus exclusively on the face-to-face communication model while here we focus on the wireless model. In the present paper, we extend the results of~\cite{CGKKKLNOS19wireless} to the realm of the wireless communication model and study time/energy trade-offs for evacuating two robots on the infinite line. Despite their apparent similarities, the face-to-face and wireless communication models lead to completely different approaches for the design of efficient linear search algorithms.

%
%

\section{Minimizing Energy Given Bounds on Evacuation Time and Speed}
\label{sec: min energy, given bounds on time, wireless}

This section is devoted to the problem \ecb{b}{c} of minimizing the total evacuation energy, given that the robots can travel at speed at most $b$ and are required to complete the evacuation within time $cd$ for every instance $d$ where $d$ is the distance of the exit from the origin. We start with establishing a necessary condition on the product $bc$.

\begin{lemma}
\label{No online:lm}
No online (wireless) algorithm can solve \ecb{b}{c} if $bc<3$.
\end{lemma}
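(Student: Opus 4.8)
The plan is a standard adversary argument. Since the energy budget is irrelevant to feasibility, the statement is equivalent to a lower bound on the evacuation \emph{time} of two wireless robots, so I would first rescale time by the factor $b$: it then suffices to treat $b=1$ and show that no wireless algorithm with maximum speed $1$ can guarantee evacuation time $cd$ on every instance $d$ unless $c\ge 3$ (the general case follows because, under the substitution $\tau=bt$, moving at speed $b$ against the deadline $cd$ is the same as moving at speed $1$ against the deadline $(bc)d$).

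Assume such a speed-$1$ algorithm exists with $c<3$. Because every exit must eventually be evacuated, the rightmost and leftmost points ever visited by the two robots grow without bound, so for each $d>0$ there is a first time $t_R$ at which some robot occupies $+d$ and a first time $t_L$ at which some robot occupies $-d$; both are finite, and both are at least $d$ since the robot in question leaves the origin at speed at most $1$. The trajectories up to a discovery do not depend on where the undiscovered exit lies, so the adversary may choose the exit after inspecting $t_R$ and $t_L$; by reflection symmetry assume $t_R\le t_L$ and place the exit at $-d$. It is discovered exactly at time $t_L$ by the robot, say $A$, that is at $-d$ then; the other robot $B$ is at some position $p\ge -d$ (no point left of $-d$ has yet been visited), so $B$ must still travel at least $p+d$ and the evacuation time is at least $t_L+p+d$.

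The argument splits on whether $A$ is also the robot that was first to reach $+d$. If it is, then by time $t_L$ robot $A$ has traced a path of length at least $d+2d=3d$ (origin to $+d$, then $+d$ to $-d$), so $t_L\ge 3d$ and the evacuation time is at least $3d$. If it is not, then $B$ was first to reach $+d$, at time $t_R$; travelling from $+d$ at time $t_R$ to $p$ at time $t_L$ at speed at most $1$ forces $p\ge d-(t_L-t_R)$, so the evacuation time is at least $t_L+p+d\ge t_R+2d\ge 3d$. In either case the adversary forces evacuation time at least $3d$, which exceeds the deadline $cd$ since $c<3$ — a contradiction. Restoring the speed bound $b$, the forced time is $3d/b$, so the deadline $cd$ demands $cd\ge 3d/b$, i.e.\ $bc\ge 3$.

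I expect the one delicate point to be the adversary's bookkeeping rather than any calculation: planting the exit on the \emph{later} of the two frontiers (so that the earlier event $t_R$ pins down the past position of the non-finder), verifying $p\ge -d$, and correctly carrying out the two-way split on which robot first reaches $+d$ — planting the exit on the wrong side makes the chain of inequalities stall. A quick sanity check: the bound is tight, since \naivew{b,b} evacuates in time exactly $3d/b$, matching $bc=3$.
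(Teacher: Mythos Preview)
Your proof is correct and follows the same adversary idea as the paper: place the exit on the side that is reached later and charge at least $3d$ (resp.\ $3d/b$) to some robot that must pass through the opposite frontier. The paper's write-up is a bit leaner because it avoids your case split: it simply tracks the robot $R$ that is first to reach the earlier frontier (your $+d$), observes that at that moment at least $d/b$ time has elapsed, and then notes that this same robot---whether or not it is the eventual finder---still has distance $2d$ to cover to reach the exit at $-d$, giving total time at least $3d/b$. Your Cases~1 and~2 are just the two ways this single robot can be labeled (finder vs.\ non-finder), and both collapse to the same inequality $t_R+2d\ge 3d$; the split is sound but unnecessary.
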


Next we show that algorithm \naivew{b,b} is an optimal solution to the problem \ecb{b}{c} when $bc =3$. We start with the following lemma:

\begin{lemma}
\label{speed-b}
Let $b, c > 0$ with $bc = 3$ and consider an evacuation algorithm such that robots use maximum speed $b$ and evacuate by time $cd$ for an exit at distance $d$ from the origin. 
Then for every $d> 0$, the points $d, -d$, must be visited at time $d/b$.
\end{lemma}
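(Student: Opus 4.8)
The plan is to argue that visiting the points $d$ and $-d$ any later than time $d/b$ would force a violation of the $cd=3d/b$ evacuation deadline for some instance. Fix $d>0$, and suppose for contradiction that, say, the point $d$ is first visited at some time $t > d/b$. Call the robot that first reaches $d$ robot $R_1$, and let $R_2$ be the other robot. We will produce a location $d'$ of the exit, very close to $d$, for which the algorithm cannot evacuate in time $3d'/b = cd'$.

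First I would observe that since the algorithm must work for the adversarial instance in which the exit is at $d$, and the exit is only discovered when a robot walks over it, at time $t^- $ (just before $R_1$ reaches $d$) neither robot has any information distinguishing ``exit at $d$'' from ``exit slightly beyond $d$''. Now consider the instance in which the exit is actually at $d' = d + \varepsilon$ for small $\varepsilon>0$. In this instance the robots behave identically up to time $t$ (the trajectories are determined by the history, which is the same), so at time $t$ robot $R_1$ is at position $d$ and robot $R_2$ is at some position $p$ with $|p| \le bt$ (speed bound). From time $t$ onward, the best the two robots can do toward evacuating the exit at $d'$ is: $R_1$ continues from $d$ to $d'$, needing at least $\varepsilon/b$ additional time, and $R_2$ travels from $p$ to $d'$, needing at least $(d'-p)/b \ge (d-p)/b$ additional time (even if $R_2$ turns around instantly and heads straight for $d'$). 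Hence the evacuation time for instance $d'$ is at least $\min$ — wait, both robots must evacuate, so it is at least $\max\{t + \varepsilon/b,\ t + (d' - p)/b\}\ge t + \varepsilon/b$. Since $t > d/b$, for sufficiently small $\varepsilon$ we get evacuation time $> d/b + \varepsilon/b = (d+\varepsilon)/b = d'/b$, which is not yet a contradiction — the deadline is $3d'/b$, not $d'/b$. So this crude bound is too weak, and I need the symmetric point.

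The key step, and the main obstacle, is to use \emph{both} points $-d$ and $d$ simultaneously. The real constraint is this: consider the two instances $d$ and $-d$. By time $\min\{t_d, t_{-d}\}$ (where $t_d, t_{-d}$ are the first-visit times of $d$ and $-d$) the robots cannot have visited either point, hence cannot distinguish the two instances, hence follow identical trajectories up to that time. Suppose $t_d \le t_{-d}$ and $t_d > d/b$. Look at instance $-d$: until time $t_d$ the robots move as in instance $d$; in particular at time $t_d$ one robot is at position $d$ (having just reached it pointlessly for instance $-d$). To evacuate in instance $-d$ that robot must get from $d$ to $-d$, a distance $2d$, taking time at least $2d/b$, so it arrives no earlier than $t_d + 2d/b > d/b + 2d/b = 3d/b = cd$, contradicting the deadline for instance $-d$. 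The same argument with roles reversed handles $t_{-d} > d/b$. Therefore $t_d = t_{-d} = d/b$ (they cannot be less, by the speed bound, since the points are at distance $d$ from the origin and robots start at the origin), which is exactly the claim.

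I should double-check the edge case where no single robot is ``at $d$'' at time $t_d$ because the point $d$ is visited exactly at $t_d$ by a robot passing through — but ``visited at time $t_d$'' means precisely that some robot is at $d$ at that instant, so this is fine. I should also note that the argument only needs the robot at $d$ to require time $\ge 2d/b$ to reach $-d$; I do not need to track the other robot at all, which keeps the argument clean. The only subtlety worth stating carefully is the indistinguishability/adversary argument (robots' trajectories are a deterministic function of observations, and no observation before time $t_d$ depends on whether the exit is at $d$ or at $-d$), and then the rest is immediate from the triangle inequality and the speed bound.
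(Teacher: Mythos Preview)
Your core adversarial/indistinguishability argument is the same as the paper's, and your handling of the case $\min\{t_d,t_{-d}\} > d/b$ is correct and matches the paper's Case~1. However, there is a genuine gap: your two cases (the one you spell out and its ``roles reversed'' twin) together only establish that $\min\{t_d,t_{-d}\} = d/b$, not that \emph{both} $t_d$ and $t_{-d}$ equal $d/b$. Concretely, suppose $t_d = d/b$ exactly but $t_{-d} > d/b$. Your argument for instance $-d$ then gives evacuation time for the robot at $d$ of at least $d/b + 2d/b = 3d/b = cd$, which meets the deadline with equality and is not a contradiction. And the ``roles reversed'' version does not apply either: it would require that in the instance with exit at $+d$ the trajectories agree with the no-exit trajectories up to time $t_{-d}$, but that fails because in instance $+d$ the exit is discovered already at time $t_d = d/b < t_{-d}$, after which the robots may (and will) behave differently, so $-d$ need never be visited at all in that instance.

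The paper closes this gap with a separate Case~2 argument that you are missing: if $t_d = d/b$ but $-d$ has not been visited by then, let $-d+2\epsilon$ be the leftmost point visited by time $d/b$ and place the exit at $-d+\epsilon$. The trajectories are unchanged up to time $d/b$ (nobody has reached $-d+\epsilon$), so the robot at $d$ still needs at least $(2d-\epsilon)/b$ more time, for a total of $(3d-\epsilon)/b$; but the deadline is now $c(d-\epsilon) = (3d-3\epsilon)/b$, which is strictly smaller. This perturbation of the exit location is the missing idea; without it your proof only pins down one of the two points.
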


\begin{theorem}
\label{lower-bound}
For every $b, c > 0$ with $bc=3$, the algorithm \naivew{b, b} is the only feasible solution to \ecb{b}{c}, and is therefore optimal, and has  total energy consumption $4b^2d$.
\end{theorem}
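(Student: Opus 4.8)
The plan is to exploit Lemma~\ref{speed-b}, which already constrains any feasible algorithm very strongly: for every $d>0$, at time exactly $d/b$ some robot is located at $d$ and some robot is located at $-d$ (and, by the speed bound, neither point is reached earlier). Writing $d=bt$, this says that at every time $t>0$ one robot sits at $+bt$ and the other at $-bt$. I would first argue that the identity of the robot occupying $+bt$ does not depend on $t$: the two robot positions are $b$-Lipschitz functions of time that coincide at $t=0$ (both at the origin), so if the ``positive'' robot changed between some $t_1<t_2$, continuity of the trajectories would force one robot to be simultaneously at a positive and a negative point near the switching time, impossible once $t$ is bounded away from $0$. Hence a single robot, say $R_1$, is at $+bt$ for all $t\ge 0$, and then necessarily $R_2$ is at $-bt$ for all $t\ge 0$ (since $+bt\neq -bt$ for $t>0$). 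A robot that is at distance $bt$ from the origin at time $t$ for every $t$, while moving at speed at most $b$, can only be travelling at speed exactly $b$ directly away from the origin with no backtracking; thus, until an exit is discovered, $R_1$ moves right at speed $b$ and $R_2$ moves left at speed $b$ --- precisely the search phase of \naivew{b,b} with $s=b$. (The pre-discovery trajectories are well defined independently of the unknown exit location, since the robots' motion before the first discovery cannot depend on it.)

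Next I would analyze the catch-up phase. Fix the instance with the exit at $+d$; the case $-d$ is symmetric with the roles of $R_1,R_2$ exchanged. By the above, $R_1$ reaches $d$ at time $d/b$ and announces, at which moment $R_2$ is at $-d$. Feasibility requires $R_2$ to reach $d$ by time $cd=3d/b$ (using $bc=3$), so it has exactly $3d/b-d/b=2d/b$ units of time to cover the distance $2d$ from $-d$ to $+d$. This forces average speed $b$, hence --- since $b$ is the maximum speed --- speed exactly $b$ along a monotone path from $-d$ to $+d$: $R_2$ must reverse direction at the instant of the announcement and proceed to the exit at full speed $b$. This is exactly the behaviour of \naivew{b,b} with $r=b$. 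Since the instance $d$ and the orientation of the exit were arbitrary, \naivew{b,b} is the unique feasible algorithm for \ecb{b}{c}; feasibility of \naivew{b,b} itself is routine (it evacuates by time $3d=cd$), so being the only feasible solution it is trivially optimal.

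Finally I would compute its total evacuation energy on the instance with the exit at distance $d$: the finder travels distance $d$ at speed $b$, contributing $d b^2$; the other robot travels distance $d$ at speed $b$ and then distance $2d$ at speed $b$, i.e.\ total path length $3d$ all at speed $b$, contributing $3 d b^2$; summing yields $4 b^2 d$. I expect the only genuine subtlety to be the continuity (``no swap'') argument that pins a single robot as the rightward one for all time; everything afterwards is forced by the speed bound together with the time budget $cd$ and the identity $bc=3$.
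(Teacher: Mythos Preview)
Your proposal is correct and follows essentially the same approach as the paper's proof: use Lemma~\ref{speed-b} to force both robots to travel outward at speed exactly $b$ until discovery, then use the tight time budget $2d/b$ to force the non-finder to return at speed exactly $b$, and compute the energy as $4b^2d$. The paper's proof is terser and treats as obvious the step you single out as the only genuine subtlety (that the same robot occupies $+bt$ for all $t$); your Lipschitz/continuity argument for this is a legitimate way to fill that gap, though note it can be phrased even more directly via the speed bound alone: if a robot is at $+bt_1$ at time $t_1$ and at $-bt_2$ at some later time $t_2$, the required speed is $b(t_1+t_2)/(t_2-t_1)>b$.
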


\begin{proof} (Theorem~\ref{lower-bound})
Lemma~\ref{speed-b}  implies that in order to achieve an evacuation time $cd$, both robots must use the maximum speed $b$ and explore in different directions. If the exit is found at distance $d$ by one of the robots, the time is $d/b$, and therefore, the other robot must travel at the maximum speed $b$ in order to arrive at the exit in time $cd$. Thus, the only algorithm that can evacuate within time $cd$ while using speed at most $b$ is \naivew{b, b}.  A total distance of $4d$ is  travelled by the two robots, all at speed $b$,  therefore the total energy consumed is $4b^2d$.
\end{proof}

%
%

Next we consider the case of $c, b >3$ and determine the optimal choices of speeds $s,r$ for \naivew{s,r}, as well as the induced total evacuation energy and competitive ratio for problem \ecb{b}{c}. 

\begin{theorem}
\label{thm: optimal speeds naive bounded speeds wireless}
Let $\delta=2+\sqrt[3]{2}\approx 3.25992$. 
For every $c,b>0$, problem \ecb{b}{c} admits a solution by algorithm \naivew{s,r}\ if and only if $cb\geq 3$. 
 For the spectrum of $c,b$ for which a solution exists, the following choices of speeds $s,r$ are feasible and optimal for \naivew{s,r}
$$
\begin{tabular}{l|cc}
			& $3 \leq cb \leq \delta$		&	 $cb > \delta $\\
\hline
$s$ 	& 	$\frac{b}{b c-2}$	& 	$\frac{\delta}{\sqrt[3]{2} c} $ \\
$r$	& $b$	& $\frac{\delta}{ c} $		 \\
\end{tabular}
$$
The induced total  evacuation energy is $f(cb)\frac{2d}{c^2}$, where
$$
f(x):=
\left\{
\begin{array}{ll}
\frac{x^2}{(x-2)^2}+x^2 &, 3 \leq x \leq \delta \\
\frac{1}{2} \left(2+\sqrt[3]{2}\right)^3 &, x >\delta
\end{array}
\right.
$$
\end{theorem}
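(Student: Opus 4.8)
## Proof Proposal

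The plan is to analyze the algorithm $\naivew{s,r}$ directly, express the evacuation time and total energy as functions of the parameters $s, r$ (and the instance $d$), impose the time constraint, and then minimize energy over the feasible region. First I would compute the two quantities of interest for a fixed instance $d$. In $\naivew{s,r}$, the finder locates the exit at time $d/s$ having walked distance $d$; at that moment the other robot has walked distance $ds/s = d$ in the opposite direction, so it sits at distance $d$ on the wrong side and must backtrack a distance $2d$ at speed $r$. Hence the evacuation time is $T(d) = d/s + 2d/r$ and, since this must be at most $cd$ for every $d$, the time constraint is simply the scale-free inequality $1/s + 2/r \le c$. The total energy is $E(d) = (d)s^2 + (d + 2d)r^2 = d s^2 + 3 d r^2$; again the $d$ factors out, so we are minimizing $g(s,r) := s^2 + 3r^2$ subject to $1/s + 2/r \le c$, $0 < s \le b$, $0 < r \le b$. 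The feasibility claim ``a solution exists iff $cb \ge 3$'' then follows because the constraint $1/s + 2/r \le c$ with $s, r \le b$ forces $c \ge 1/b + 2/b = 3/b$, i.e. $cb \ge 3$, matching Lemma~\ref{No online:lm}; conversely when $cb \ge 3$ the point $s = r = b$ is feasible.

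Next I would solve the constrained optimization. Since $g$ is increasing in both $s$ and $r$, the time constraint is tight at the optimum: $1/s + 2/r = c$. Parametrize and minimize $s^2 + 3r^2$ on this curve using Lagrange multipliers (or eliminate $s = 1/(c - 2/r)$ and minimize a one-variable function of $r$). The unconstrained stationarity condition gives a relation between $s$ and $r$; solving it yields the ``interior'' candidate $s = \delta/(\sqrt[3]{2}\,c)$, $r = \delta/c$ with $\delta = 2 + \sqrt[3]{2}$ — one should check that these indeed satisfy $1/s + 2/r = c$, which pins down why $\delta$ is the particular algebraic number $2 + 2^{1/3}$ (the ``$2$'' from the coefficient in $1/s+2/r$, the cube root from the exponent matching between the quadratic objective and the reciprocal constraint). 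The catch is the speed cap: this interior solution is only admissible when $r = \delta/c \le b$, i.e. $cb \ge \delta$. When $3 \le cb < \delta$, the unconstrained minimum lies outside the box, and by convexity of $g$ together with monotonicity the optimum is attained on the boundary face $r = b$; substituting $r = b$ into the tight constraint gives $1/s = c - 2/b = (cb-2)/b$, i.e. $s = b/(cb - 2)$, which is the claimed value and is $\le b$ precisely because $cb \ge 3 > 2$ (one should also confirm $s = b/(cb-2) > 0$ and that the remaining one-sided condition does not push $s$ past $b$ either).

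Finally I would substitute each case back into $g(s,r)\cdot d$ to read off the induced energy. In the capped regime $3 \le cb \le \delta$: $E/d = s^2 + 3r^2 = b^2/(cb-2)^2 + 3b^2$; rewriting $b = (cb)/c$ gives $\frac{2d}{c^2}\left(\frac{(cb)^2}{2(cb-2)^2} + \frac{3(cb)^2}{2}\right)$ — here I should double-check the algebra to see this equals $f(cb)\frac{2d}{c^2}$ with $f(x) = \frac{x^2}{(x-2)^2} + x^2$ as stated (note $\frac{3(cb)^2}{2}\cdot\frac{2}{c^2} = 3b^2$, consistent, and the first term matches since $\frac{(cb)^2}{2(cb-2)^2}\cdot\frac{2}{c^2} = \frac{b^2}{(cb-2)^2}$; so indeed $f(x)=\frac{x^2}{(x-2)^2}+x^2$). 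In the interior regime $cb > \delta$: $E/d = s^2 + 3r^2 = \frac{\delta^2}{2^{2/3}c^2} + 3\frac{\delta^2}{c^2} = \frac{\delta^2}{c^2}(2^{-2/3} + 3)$, and one verifies $\delta^2(2^{-2/3}+3) = 2\cdot\frac12\delta^3\cdot\frac{1}{\delta}(2^{-2/3}+3)$... more cleanly, I would just confirm numerically/algebraically that $2^{-2/3}\delta^2 + 3\delta^2 = \frac12\delta^3$, which is equivalent to $2^{-2/3} + 3 = \frac{\delta}{2} = 1 + \frac{2^{1/3}}{2} = 1 + 2^{-2/3}$, i.e. to $3 = 1 + ... $— wait, this needs care, so the main obstacle is precisely pinning down these constants: I expect the bulk of the work (and the place an error is most likely) to be the exact-arithmetic verification that the Lagrange solution produces exactly $\delta = 2 + 2^{1/3}$ and that the energy collapses to $\frac12\delta^3 \cdot \frac{2d}{c^2}$. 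The structural argument (tightness of the time constraint, convexity forcing a boundary solution when the cap binds, the iff for feasibility) is routine; the cube-root bookkeeping is where I would slow down and be careful.
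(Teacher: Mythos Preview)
Your overall strategy---show the time constraint is tight, set up a convex program, use KKT to find the interior optimum, and then detect when the speed cap $r\le b$ binds---matches the paper's approach exactly. However, there is a genuine error at the very first step: your formula for the total evacuation energy is wrong. You wrote $E(d) = d\,s^2 + (d+2d)\,r^2 = d(s^2+3r^2)$, which treats the non-finder as travelling its entire $3d$ path at speed $r$. But in $\naivew{s,r}$ the non-finder covers its first segment of length $d$ at speed $s$ (both robots move symmetrically at speed $s$ until the exit is announced) and only the return leg of length $2d$ at speed $r$. Hence the finder spends energy $d\,s^2$, the non-finder spends $d\,s^2 + 2d\,r^2$, and the \emph{total} energy is $2d(s^2+r^2)$, which is exactly the paper's $2d\,\eenergyw{s,r}$.

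This propagates everywhere. With the correct objective $s^2+r^2$, the Lagrange condition gives $2s^3 = r^3$, i.e.\ $r = 2^{1/3}s$; plugging into $1/s+2/r=c$ yields $s = (1+2^{2/3})/c = \delta/(2^{1/3}c)$ and $r=\delta/c$, and the energy collapses to $\tfrac{2d}{c^2}\cdot\tfrac12\delta^3$ via the identity $1+2^{-2/3}=\delta/2$. With your objective $s^2+3r^2$ the stationarity condition would instead give $2s^3=3r^3$, producing different speeds that do \emph{not} match the theorem's statement---which is also why your attempted numerical check at the end failed (``$3=1+\ldots$ --- wait''). Similarly, in the boundary case your expression $\tfrac{x^2}{2(x-2)^2}+\tfrac{3x^2}{2}$ does not equal the claimed $f(x)=\tfrac{x^2}{(x-2)^2}+x^2$; your ``consistent'' check there only verified that you correctly pulled out the factor $2d/c^2$, not that the result matched $f$. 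Once you fix the energy to $2d(s^2+r^2)$, every subsequent step of your outline goes through cleanly and reproduces the paper's proof.
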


It was observed in~\cite{CGKKKLNOS19wireless} that the optimal offline solution, given that $d$ is known, equals $\frac{2d}{c^2}$.  The competitive ratio  is
given by $
\sup_{d} \frac{c^2}{2d}~ e(c,b,d) = f(cb)
$
for algorithms inducing total evacuation energy $e(c,b,d)$. The competitive ratio of \naivew{s,r} for the choices of Theorem~\ref{thm: optimal speeds naive bounded speeds wireless} is summarized in Figure~\ref{fig: compratiowireless}.
\begin{figure}[h!]
  \centering
  \includegraphics[width=0.5\linewidth]{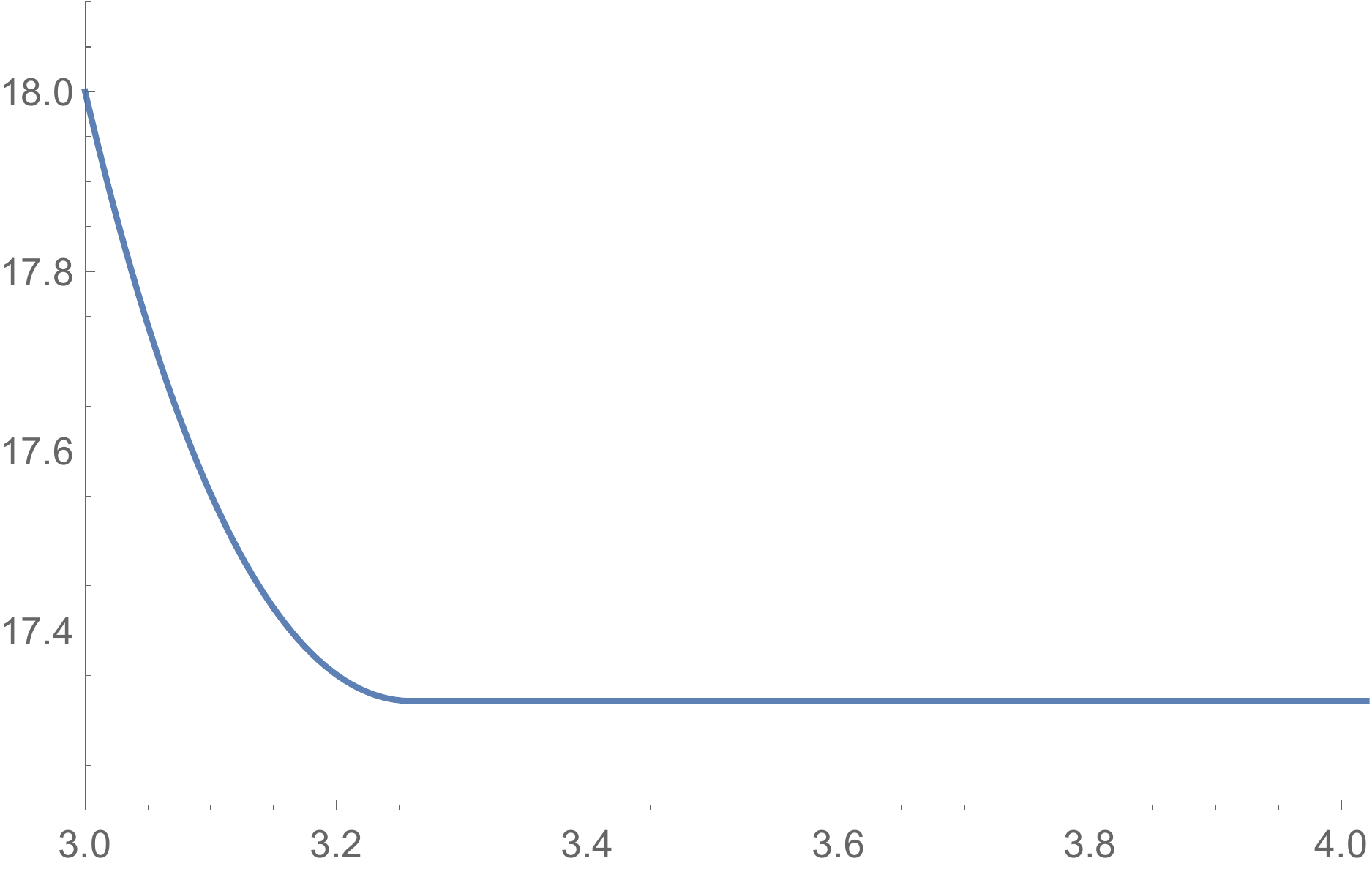}
\caption{
The competitive ratio of \naivew{s,r} for the choices of Theorem~\ref{thm: optimal speeds naive bounded speeds wireless}.
}
\label{fig: compratiowireless}
\end{figure}
Note that in particular, Theorem~\ref{thm: optimal speeds naive bounded speeds wireless} claims that the competitive ratio only depends on the product $cb$, and when $cb=3$, the competitive ratio is 18 and is decreasing in $cb$ (strictly only when $cb<\delta$). The optimal speed choices for the unbounded problem \ecb{c}{\infty}\ are exactly those that appear under case $cb>\delta$. 
The remaining of the section is devoted to proving Theorem~\ref{thm: optimal speeds naive bounded speeds wireless}.

First we derive closed formulas for the performance of \naivew{s,r}. 
From the definition of energy used, and given that the robots move at speed 1, we deduce what the evacuation time and energy are when the exit is placed at distance $d$ from the origin.
The following two functions will be invoked throughout our argument below. 
\begin{align}
\ttimew{s,r}&:= \frac1s+\frac2r \label{equa: ttimew}\\
\eenergyw{s,r}&:= s^2 + r^2	\label{equa: eenergyw}
\end{align}
\begin{lemma}
\label{lem: energy time formulas wireless}
Let $b,c$ be such that there exist $s,r$ for which \naivew{s,r} is feasible. 
Then, for instance $d$ of \ecb{b}{c}, the induced evacuation time of \naivew{s,r} is 
$
d\cdot \ttimew{s,r}
$
and the induced total evacuation energy is 
$
2d\cdot \eenergyw{s,r}
.$
\end{lemma}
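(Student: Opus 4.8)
The plan is to trace the two robots' trajectories under \naivew{s,r} on a fixed instance $d$ and to sum travel times and energies segment by segment. By the symmetry of the line and of the algorithm --- the two robots play interchangeable roles depending on which of the two points at distance $d$ holds the exit --- it suffices to analyze the case in which the exit is at $+d$; the case $-d$ is identical after swapping the robots' labels. So I would fix the exit at $+d$, let robot $A$ be the robot moving in the positive direction and $B$ the one moving in the negative direction, both initially at speed $s$.

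First I would determine the evacuation time. Robot $A$ walks over the exit at time $d/s$, having traveled distance $d$; since the wireless model delivers the ``exit found'' message instantaneously, $B$ learns of the discovery at exactly time $d/s$, at which instant it is located at $-d$ (it has traveled distance $d$ at speed $s$). Robot $B$ then reverses direction and moves at speed $r$, and it must cover the distance $2d$ separating $-d$ from $+d$, which takes an additional time $2d/r$. Since $A$ has already halted at the exit, evacuation completes when $B$ arrives, namely at time $d/s+2d/r = d\cdot\ttimew{s,r}$, by the definition \eqref{equa: ttimew}.

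Next I would compute the total evacuation energy using the definition $\int s^2(x)\,\dd x$, which on a segment traversed at constant speed reduces to (length) times (speed)$^2$. Robot $A$ traverses a single segment of length $d$ at speed $s$, spending energy $ds^2$. Robot $B$ traverses a segment of length $d$ at speed $s$ (its outward leg) followed by a segment of length $2d$ at speed $r$ (its return leg), spending $ds^2 + 2dr^2$. Adding the two contributions gives a total evacuation energy of $2ds^2 + 2dr^2 = 2d\cdot\eenergyw{s,r}$, by the definition \eqref{equa: eenergyw}, which is the claim.

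There is no genuinely hard step here; the only points requiring attention are (i) invoking the symmetry between the two admissible exit placements so that a single case analysis suffices, (ii) recording that at the moment of discovery the non-finding robot is at distance exactly $2d$ (not $d$) from the exit --- this is the source of the $2/r$ term in $\ttimew{s,r}$ and of the $2r^2$ term in the energy --- and (iii) using the hypothesis that $b,c$ are such that \naivew{s,r} is feasible, which guarantees that both robots actually reach the exit, so that the stated quantities are indeed the evacuation time and total evacuation energy on instance $d$.
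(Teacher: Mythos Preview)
Your proof is correct and is exactly the direct computation the paper has in mind; the paper itself does not spell out a separate proof for this lemma, treating it as immediate from the definitions (the sentence preceding the lemma is its entire justification). Your trace of the two robots' trajectories and the segment-by-segment tally of time and energy is the intended argument.
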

Next we show the spectrum of $c,b$ for which \naivew{s,r}\ is applicable. 
\begin{lemma}
\label{lem: naive as NLP wireless}
Algorithm \naivew{s,r} gives rise to a feasible solution to problem \ecb{b}{c} if and only if $bc\geq 3$.
For every such $b,c>0$, the optimal choices  of \naive{s,r} can be obtained by solving Convex Program:
\begin{align}
& ~~\min_{s,r \in \reals} \eenergyw{s,r} \tag{$\nlpw{c}{b}$}\\
s.t. ~~&
\ttimew{s,r} \leq c  \notag \\
&0 \leq s,r\leq b. \notag
\end{align}
Moreover, if $s_0, r_0$ are the optimizers to $\nlpw{c}{b}$, then the competitive ratio of \naivew{s_0,r_0} equals 
$
c^2 \cdot \eenergyw{s_0,r_0}.
$
\end{lemma}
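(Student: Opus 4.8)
The plan is to reduce every clause of the lemma to the closed forms supplied by Lemma~\ref{lem: energy time formulas wireless}. By that lemma, on instance $d$ the algorithm \naivew{s,r}\ has evacuation time $d\cdot\ttimew{s,r}$ with $\ttimew{s,r}=\frac1s+\frac2r$, and total evacuation energy $2d\cdot\eenergyw{s,r}$ with $\eenergyw{s,r}=s^2+r^2$. Since \naivew{s,r}\ only makes sense when $s,r>0$ (otherwise some robot never reaches the exit), it is a feasible solution to \ecb{b}{c}\ exactly when $0<s,r\le b$ and $d\cdot\ttimew{s,r}\le cd$ holds for every instance $d$; the latter is $d$-independent and amounts to $\ttimew{s,r}\le c$. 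I would then note that $\ttimew{s,r}$ is strictly decreasing in each of $s,r$, so over $(0,b]^2$ it attains its minimum $3/b$ at $s=r=b$. Hence a feasible pair $(s,r)$ exists if and only if $3/b\le c$, i.e.\ $bc\ge 3$, which is the first claim; this also shows that relaxing the positivity constraints to $0\le s,r$ in the program below costs nothing, since $s=0$ or $r=0$ violates the time constraint.

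For the optimality claim, fix an instance $d$. The total evacuation energy $2d\cdot\eenergyw{s,r}$ of a feasible \naivew{s,r}\ is minimized precisely when $\eenergyw{s,r}$ is minimized over the feasible set $\{\,0\le s,r\le b,\ \ttimew{s,r}\le c\,\}$, and crucially the minimizer does not depend on $d$, so a single choice $(s_0,r_0)$ is simultaneously optimal for every instance; this resolves the mild ambiguity in "minimize the total evacuation energy'' for a quantity that itself scales with $d$. That minimization is literally the program $\nlpw{c}{b}$. It is convex: $\eenergyw{s,r}$ is a sum of squares hence convex, the box $0\le s,r\le b$ is convex, and $\{\ttimew{s,r}\le c\}$ is a convex sublevel set because $s\mapsto 1/s$ and $r\mapsto 2/r$ are convex on $(0,\infty)$, so $\ttimew{s,r}$ is convex there. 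By the previous paragraph $\nlpw{c}{b}$ is feasible iff $bc\ge 3$, and any optimizer automatically has $s_0,r_0>0$.

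Finally, recall the observation of~\cite{CGKKKLNOS19wireless} that the optimal offline energy on instance $d$ equals $\frac{2d}{c^2}$. Since \naivew{s_0,r_0}\ spends energy $2d\cdot\eenergyw{s_0,r_0}$ on instance $d$, its competitive ratio is $\sup_{d>0}\frac{2d\cdot\eenergyw{s_0,r_0}}{2d/c^2}=c^2\cdot\eenergyw{s_0,r_0}$, the supremum being attained at every $d$ because the ratio is $d$-free. None of these steps is technically deep; the only points requiring care are the bookkeeping of the universal quantifier over $d$ — checking that both the feasibility condition and the energy-minimizing pair are independent of $d$ — and the convexity verification, which is where I would be most explicit.
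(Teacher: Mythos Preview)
Your proposal is correct and follows essentially the same approach as the paper's proof: both invoke Lemma~\ref{lem: energy time formulas wireless} for the closed forms, both argue that $\ttimew{s,r}$ is strictly decreasing so its minimum on $(0,b]^2$ is $3/b$ (yielding the $bc\ge 3$ feasibility threshold), both verify convexity of the program, and both read the competitive ratio off the offline benchmark $2d/c^2$. You are somewhat more explicit than the paper about the $d$-independence bookkeeping, the harmlessness of relaxing $s,r>0$ to $s,r\ge 0$, and the competitive-ratio computation, but these are expository refinements rather than a different argument.
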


A corollary of Lemma~\ref{lem: naive as NLP wireless} is that any candidate optimizer to $\nlpw{c}{b}$ satisfying 1st order necessary optimality conditions is also a global optimizer. As a result, the proof of Theorem~\ref{thm: optimal speeds naive bounded speeds wireless} follows by showing the proposed solution is feasible and satisfies 1st order necessary optimality conditions. This is done in Lemmata~\ref{lem: opt wireless 3<=cb<=3.25} and~\ref{lem: opt wireless cb>3.25}.

Towards proving that 1st order optimality conditions are satisfied, we argue first that for all $c,b>0$ with $cb\geq 3$, the optimizers of $\nlpw{c}{b}$ satisfy the time constraint tightly. Indeed, if not, then one could reduce any of the values among $s,r$ to make the constraint tight, improving the induced energy. Hence, in the optimal solutions to $\nlpw{c}{b}$, any of $s,r\leq b$ could be additionally tight or not. 
In what follows, $\delta$ represents $2+\sqrt[3]{2}$, as in the statement of Theorem~\ref{thm: optimal speeds naive bounded speeds wireless}.

\begin{lemma}
\label{lem: opt wireless 3<=cb<=3.25}
For each $c,b>0$ for which $3\leq cb \leq \delta$, the optimal solution to $\nlpw{c}{b}$ is given by 
$s=\frac{b}{b c-2}, r=b$. 
\end{lemma}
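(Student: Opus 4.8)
The plan is to exhibit the proposed point $(s,r)=\left(\tfrac{b}{bc-2},\,b\right)$ as a feasible point of $\nlpw{c}{b}$ that satisfies the first-order (KKT) conditions; since $\nlpw{c}{b}$ is a convex program, the corollary observed right after Lemma~\ref{lem: naive as NLP wireless} then guarantees that any such point is a global optimizer.

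First I would dispose of the endpoint $bc=3$. There $\ttimew{s,r}=\tfrac1s+\tfrac2r\ge \tfrac3b=c$ for every $s,r\le b$, with equality only when $s=r=b$, so the feasible region of $\nlpw{c}{b}$ is the single point $(b,b)$, which agrees with the claimed formula $s=\tfrac{b}{bc-2}=b$. Hence I may assume $3<bc\le\delta$ from now on. Feasibility of $(s,r)=\left(\tfrac{b}{bc-2},b\right)$ is then immediate: $0<s=\tfrac{b}{bc-2}<b$ because $bc-2>1$; $r=b$; and $\tfrac1s+\tfrac2r=\tfrac{bc-2}{b}+\tfrac2b=c$, so the time constraint is tight, as it must be at any optimum by the remark preceding the lemma.

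Next I would write the KKT system for $\nlpw{c}{b}$, using multipliers $\lambda\ge 0$ for the constraint $\ttimew{s,r}\le c$, $\mu\ge 0$ for $r\le b$, and $\nu\ge 0$ for $s\le b$. Stationarity gives $2s-\lambda/s^2+\nu=0$ and $2r-2\lambda/r^2+\mu=0$. Because $s<b$ strictly, complementary slackness forces $\nu=0$, whence $\lambda=2s^3>0$; substituting into the $r$-equation yields $\mu=\tfrac{4s^3}{r^2}-2r$. The only condition that is not automatic is dual feasibility $\mu\ge 0$, i.e.\ $2s^3\ge r^3$, i.e.\ $s\ge r/\sqrt[3]{2}$; plugging in $s=\tfrac{b}{bc-2}$ and $r=b$ this is equivalent to $bc-2\le\sqrt[3]{2}$, i.e.\ $bc\le 2+\sqrt[3]{2}=\delta$ --- exactly our hypothesis. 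With $\lambda,\mu\ge 0$, $\nu=0$, primal feasibility established above, and complementary slackness holding both at the tight time constraint and at $r=b$, all KKT conditions are met, and convexity of $\nlpw{c}{b}$ (Lemma~\ref{lem: naive as NLP wireless}) upgrades this to global optimality.

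The calculations are elementary; the only point needing care is the bookkeeping of which box constraints are active --- namely confirming that $s\le b$ is slack throughout $(3,\delta]$ (so $\nu=0$) while $r\le b$ is tight, and recognizing that the sign condition on $\mu$ is precisely what the upper threshold $bc\le\delta$ encodes. As a cross-check, an equivalent one-variable route substitutes the tight time constraint $s=\tfrac{r}{cr-2}$ to reduce $\nlpw{c}{b}$ to minimizing $g(r)=\tfrac{r^2}{(cr-2)^2}+r^2$ over $r\in\left[\tfrac{2b}{bc-1},\,b\right]$; one computes $g'(r)=2r\left(1-\tfrac{2}{(cr-2)^3}\right)$, which is $\le 0$ on the whole interval exactly when $cb\le\delta$, so the minimum is attained at $r=b$, recovering the same answer.
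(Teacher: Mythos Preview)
Your proof is correct and follows essentially the same approach as the paper: verify feasibility and the KKT conditions at the proposed point, then invoke convexity (from Lemma~\ref{lem: naive as NLP wireless}) to conclude global optimality. The paper does not separate the endpoint $bc=3$ and expresses the sign condition on the multiplier for $r\le b$ via the cubic $g(x)=10-12x+6x^2-x^3$, computing its roots; your reduction of that same condition to $2s^3\ge r^3$, i.e.\ $(bc-2)^3\le 2$, is the same inequality in simpler form, and your one-variable cross-check is an additional sanity check not present in the paper.
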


\begin{lemma}
\label{lem: opt wireless cb>3.25}
For each $c,b>0$ for which $cb > \delta$, the optimal solution to $\nlpw{c}{b}$ is given by 
$s=\frac{\delta}{\sqrt[3]{2} c}, r=\frac{\delta}{ c}$. 
\end{lemma}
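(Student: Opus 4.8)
The plan is to treat $\nlpw{c}{b}$ as a convex minimization problem and exploit the structural observations already established in the excerpt: namely, that the time constraint $\ttimew{s,r}\le c$ is tight at any optimum, and that by Lemma~\ref{lem: naive as NLP wireless} any point satisfying the first-order (KKT) necessary conditions is automatically a global optimizer. So the entire task reduces to exhibiting a feasible candidate and verifying the KKT conditions for it. First I would guess, based on the regime $cb>\delta$, that neither box constraint $s\le b$ nor $r\le b$ is active, so the optimum is an interior point of the box determined solely by the time constraint. Under that hypothesis the problem becomes: minimize $s^2+r^2$ subject to $\tfrac1s+\tfrac2r=c$. Setting up the Lagrangian $s^2+r^2-\lambda(\tfrac1s+\tfrac2r-c)$ and differentiating gives $2s=-\lambda/s^2$ and $2r=-2\lambda/r^2$, i.e. $s^3 = -\lambda/2$ and $r^3=-\lambda$, hence $r^3=2s^3$, so $r=\sqrt[3]{2}\,s$.

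Next I would substitute $r=\sqrt[3]{2}\,s$ into the tight constraint $\tfrac1s+\tfrac2r=c$, obtaining $\tfrac1s\bigl(1+\tfrac{2}{\sqrt[3]{2}}\bigr)=c$, i.e. $\tfrac1s\bigl(1+\sqrt[3]{4}\bigr)=c$. A short simplification should show $1+\sqrt[3]{4} = 2+\sqrt[3]{2}-1+\sqrt[3]{4}-1$... more directly, one checks $\sqrt[3]{2}\,(2+\sqrt[3]{2}) = 2\sqrt[3]{2}+\sqrt[3]{4}$, which is not quite it, so instead I would verify the claimed value $s=\tfrac{\delta}{\sqrt[3]{2}\,c}$ directly: then $\tfrac1s = \tfrac{\sqrt[3]{2}\,c}{\delta}$ and $\tfrac2r = \tfrac{2}{\sqrt[3]{2}\,s}\cdot\tfrac1{\sqrt[3]{2}} \cdot \sqrt[3]{2}$... concretely, $r=\sqrt[3]{2}\,s = \tfrac{\delta}{c}$, so $\tfrac2r = \tfrac{2c}{\delta}$, and $\tfrac1s+\tfrac2r = \tfrac{c}{\delta}(\sqrt[3]{2}+2) = \tfrac{c}{\delta}\cdot\delta = c$, confirming feasibility of the time constraint with equality. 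Then I would check the box constraints: $r=\delta/c\le b \iff cb\ge\delta$, which holds by hypothesis, and $s=\delta/(\sqrt[3]{2}c) < \delta/c \le b$ as well, so the candidate is feasible for $\nlpw{c}{b}$.

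Finally, I would assemble the KKT verification: the candidate satisfies stationarity of the Lagrangian for the time constraint with multiplier $\lambda = -2s^3 = -2\delta^3/(2c^3) = -\delta^3/c^3 < 0$ (sign consistent with a $\le$ constraint written as active), the box-constraint multipliers are all zero (complementary slackness, since those constraints are slack), and primal feasibility was just checked. Since $\eenergyw{s,r}=s^2+r^2$ is strictly convex and the feasible region is convex, by the corollary to Lemma~\ref{lem: naive as NLP wireless} this stationary point is the unique global optimizer, giving $s=\tfrac{\delta}{\sqrt[3]{2}\,c},\ r=\tfrac{\delta}{c}$. The only mild subtlety — the main thing to get right rather than a true obstacle — is justifying that the box constraints are inactive rather than assuming it: this is handled by the dichotomy already set up before Lemma~\ref{lem: opt wireless 3<=cb<=3.25}, namely that at the optimum either $r=b$ (the regime covered by Lemma~\ref{lem: opt wireless 3<=cb<=3.25}) or $r<b$, and a quick monotonicity/boundary computation shows the "$r=b$" branch forces $cb\le\delta$, so for $cb>\delta$ we must be in the interior branch analyzed above.
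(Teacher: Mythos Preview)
Your proposal is correct and follows essentially the same route as the paper's proof: assume only the time constraint is tight, write the KKT stationarity condition to obtain $r^3=2s^3$ (equivalently $\lambda=2s^3>0$ in the paper's sign convention), plug into $\ttimew{s,r}=c$ to recover $s=\tfrac{\delta}{\sqrt[3]{2}c}$ and $r=\tfrac{\delta}{c}$, and then observe that the box constraints $s,r\le b$ are satisfied (strictly) precisely when $cb>\delta$. Your closing paragraph about ruling out the ``$r=b$'' branch is a bit more than the paper bothers with---convexity of $\nlpw{c}{b}$ already guarantees that any feasible KKT point is the global optimizer, so exhibiting one such point suffices---but it does no harm.
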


\begin{proof} (Theorem~\ref{thm: optimal speeds naive bounded speeds wireless})
By Lemma~\ref{lem: naive as NLP wireless} and  Lemma~\ref{lem: opt wireless 3<=cb<=3.25}, the optimal induced energy when $3\leq cb \leq \delta$ is 
$$
2d \eenergyw{\frac{b}{b c-2},b}=
2d\left( \frac{b^2}{(b c-2)^2}+b^2\right)
$$
and the induced competitive ratio is 
$$
(cb)^2\left( 1+ \frac{1}{(cb-2)^2}\right).
$$
Finally, by Lemma~\ref{lem: naive as NLP wireless} and Lemma~\ref{lem: opt wireless cb>3.25}, the optimal induced energy when $ cb > \delta$ is 
$$
2d \eenergyw{\frac{1+2^{2/3}}{c}, \frac{2+\sqrt[3]{2}}{c}}
=
d \frac{ \left(2+\sqrt[3]{2}\right)^3}{c^2}.
$$
Hence the competitive ratio is constant and equals 
$$
\frac{1}{2} \left(2+\sqrt[3]{2}\right)^3  
\approx 17.3217,
$$
completing the proof of Theorem~\ref{thm: optimal speeds naive bounded speeds wireless}.
\end{proof}

\section{Minimizing Evacuation Time, Given Constant Evacuation Energy}
\label{sec: min time, given constant energy}

In this section we consider the problem of minimizing evacuation time, given {\em constant}  total (or makespan) evacuation energy.  
First we observe that if the robots can use only a finite number of speeds, there is no feasible solution to the problems \we{b}{\Delta} or \wec{b}{\Delta}.

\begin{theorem}
\label{If Delta is a constant:thm}
If $\Delta$ is a constant,  and the robots have access to only a finite number of speeds, there is no feasible solution to the problems \we{b}{\Delta} or \wec{b}{\Delta} 
 \end{theorem}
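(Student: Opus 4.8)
The plan is to exploit the fact that a finite speed set has a smallest \emph{nonzero} element, together with the fact that reaching an exit arbitrarily far from the origin forces an arbitrarily large travelled distance, hence --- when speeds are bounded away from $0$ --- an arbitrarily large energy, eventually exceeding the fixed constant $\Delta$.

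First I would fix an arbitrary algorithm whose robots use only speeds from a finite set $S\subseteq\reals_{\ge 0}$. If $S\subseteq\{0\}$, no robot ever moves and evacuation fails for every instance $d>0$, so we may assume $s_{\min}:=\min\{s\in S:\ s>0\}$ exists and is strictly positive. Next I would observe that in any execution in which robot $i$ evacuates for instance $d$, its trajectory must travel from the origin to a point at distance $d$; hence the total length $L_i$ of the (countably many) intervals actually traversed by robot $i$ satisfies $L_i\ge d$. Since the energy of robot $i$ is $\int s^2(x)\,\dd x$ taken over those traversed intervals as in Section~\ref{sec:prelims}, and $s(x)\ge s_{\min}$ wherever the robot is actually moving, the energy spent by robot $i$ is at least $s_{\min}^2 L_i\ge s_{\min}^2\, d$. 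Intervals on which the robot is at rest contribute nothing to $L_i$ and nothing to its energy, so they are irrelevant; and revisiting a point several times only increases $L_i$, which only helps.

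Finally I would combine these bounds. The makespan evacuation energy is $\max_i(\text{energy of robot }i)\ge s_{\min}^2\, d$, and the total evacuation energy is at least as large, so both are $\ge s_{\min}^2\, d$. A feasible solution to \we{b}{\Delta} (resp.\ \wec{b}{\Delta}) must keep this quantity at most the constant $\Delta$ for \emph{every} instance $d$; choosing any $d>\Delta/s_{\min}^2$ gives a contradiction, so no finite-speed algorithm is feasible.

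I do not expect a genuine obstacle here. The only point needing a little care is the energy estimate: one must check that the position-parametrized energy integral of Section~\ref{sec:prelims} is indeed bounded below by $s_{\min}^2$ times the total travelled distance, correctly accounting for rest periods and for back-and-forth motion over the same point; this is pure bookkeeping. (As a remark, this argument also explains why the algorithm of Theorem~\ref{thm: upper bound constant max energy e<=1} must let the speed tend to $0$ with distance, and hence why unbounded memory is needed for that regime.)
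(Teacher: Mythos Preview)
Your proposal is correct and follows essentially the same approach as the paper: both arguments use the minimum positive speed $s_{\min}$ in the finite set to lower-bound a robot's energy by $s_{\min}^2 d$, then place the exit at distance exceeding $\Delta/s_{\min}^2$. Your version is somewhat more careful (explicitly handling the case $S\subseteq\{0\}$, rest intervals, and back-and-forth motion), but the underlying idea is identical.
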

\begin{proof} (Theorem~\ref{If Delta is a constant:thm})
Suppose the robots can only use speeds in a finite set. Wlog let $s$ be the minimum speed in the set. Define $d' = \Delta/s^2$, and place the exit at $d'+\epsilon$ for any $\epsilon>0$.
Travelling at any speed at or above $s$, it is impossible for even one of the robots to reach the exit with energy $\leq \Delta$. 
\end{proof}

Next we prove a lower bound on the evacuation time in this setting. 

\begin{theorem}
\label{thm: lower bound constant energy}
For every constant $e\in \reals_+$, the optimal evacuation time for problem \we{b}{e} is  $\Omega(d^{3/2})$, asymptotically in $d$. 
\end{theorem}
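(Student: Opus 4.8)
The plan is to prove the stronger fact that a \emph{single} robot operating under a fixed total energy budget $e$ needs time $\Omega(x^{3/2})$ merely to \emph{reach} a point at distance $x$ from the origin, and then to invoke the definition of evacuation, which forces both robots to reach the exit.

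\textbf{Step 1: the key inequality.} View a robot's trajectory as a finite concatenation of maximal monotone segments, and let $\Omega\subseteq\reals$ be the disjoint union of the position-intervals it sweeps, with $s(\cdot)>0$ the speed as a function of position on $\Omega$. Set $L=\int_\Omega 1\,\dd x$ (total distance travelled), $E=\int_\Omega s^2(x)\,\dd x$ (energy spent, by the $xs^2$ law), and $T=\int_\Omega s(x)^{-1}\,\dd x$ (elapsed travel time). Writing $1=s^{2/3}\cdot s^{-2/3}$ and applying H\"older's inequality with exponents $3$ and $3/2$,
$$
L \;=\; \int_\Omega s(x)^{2/3}\, s(x)^{-2/3}\,\dd x \;\le\; \left(\int_\Omega s^2(x)\,\dd x\right)^{1/3}\!\left(\int_\Omega s(x)^{-1}\,\dd x\right)^{2/3} \;=\; E^{1/3}\,T^{2/3},
$$
so $T\ge L^{3/2}/\sqrt{E}$. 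Portions where the robot is stopped (or $s=0$) contribute nothing to $L$ or $E$ and only increase the clock, so they can be discarded from $\Omega$ without affecting the inequality.

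\textbf{Step 2: applying it to evacuation.} Fix any feasible algorithm for \we{b}{e}, so each robot consumes total energy at most $e$ over its whole execution. Place the exit at distance $d$ on the positive side. By feasibility both robots eventually reach it; fix one robot and let $\tau$ be the first time it is at position $d$. Because its position is a continuous function of time, the set of points it occupied during $[0,\tau]$ is an interval containing both $0$ and $d$, so its distance travelled satisfies $L\ge d$, while its energy over $[0,\tau]$ is $E\le e$. Step 1 gives $\tau\ge L^{3/2}/\sqrt{E}\ge d^{3/2}/\sqrt{e}$. Since the evacuation time is when the \emph{second} robot arrives, it is at least $\tau$, hence at least $d^{3/2}/\sqrt{e}=\Omega(d^{3/2})$ on this instance, with the hidden constant depending only on $e$.

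\textbf{Remarks on difficulties.} The argument never uses the speed cap $b$; since that constraint only shrinks the feasible set, the bound holds a fortiori for \we{b}{e} (and, since the makespan budget bounds each robot's energy individually, the same proof applies verbatim to \wec{b}{e}). There is no substantial obstacle: the crux is recognizing that the energy functional $\int s^2$ and the time functional $\int s^{-1}$ pair with the exponents $3$ and $3/2$ in H\"older, which are forced by the $xs^2$ energy law. The only care needed is bookkeeping: the disjoint-union-of-intervals viewpoint to handle direction reversals and non-monotone motion, truncating the trajectory at the first visit to the exit while still charging the full-run budget $e$, and discarding zero-speed sets so that $s^{-1}$ is integrable.
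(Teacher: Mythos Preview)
Your proof is correct and reaches the same conclusion as the paper, but the route is more rigorous. The paper's own argument is two lines: it asserts that a robot reaching distance $d$ on energy budget $e$ ``can travel at speed at most $e/\sqrt{d}$'' and then divides. Taken literally that bound holds only for constant-speed motion; the paper is tacitly using that constant speed is optimal for this tradeoff, without saying so. Your H\"older step is precisely the clean justification of that optimum for arbitrary (variable-speed, direction-reversing) trajectories, and it yields the sharp form $T\ge L^{3/2}/\sqrt{E}$, hence $d^{3/2}/\sqrt{e}$ rather than the paper's $d^{3/2}/e$ (a slip in their arithmetic, immaterial for the $\Omega(d^{3/2})$ claim). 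So the underlying idea is the same, but your version actually handles the general case.

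One cosmetic point: describing $\Omega$ as ``the disjoint union of the position-intervals it sweeps'' is ambiguous, since those intervals overlap in $\reals$ whenever the robot backtracks; you really mean a disjoint union as an abstract measure space, i.e.\ arc-length parametrization. It is cleaner to parametrize by time: with $v(t)=|x'(t)|$ one has $L=\int_0^T v\,\dd t$ and, from the $xs^2$ law, $E=\int_0^T v^3\,\dd t$, and then H\"older with exponents $3$ and $3/2$ applied to $v\cdot 1$ gives $L\le E^{1/3}T^{2/3}$ directly, with no bookkeeping about monotone pieces or zero-speed sets.
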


\begin{proof} (Theorem~\ref{thm: lower bound constant energy})
For any arbitrarily large value of $d$, we place the exit at distance $d$ from the origin. For any robot to reach the exit before running out of battery, a robot can travel at speed at most $e/\sqrt{d}$. Therefore the time for even the first robot to reach the exit is at least $\frac{d}{e/\sqrt{d}} = d^{3/2}/e$.
\end{proof}

Note that the above lower bound also holds for problem \wec{b}{e} (if the total evacuation energy is no more than $e$, then also the makespan evacuation energy is no more than $e$). Next we prove that this naive lower bound is nearly tight (up to a $\log d$ factor). 
First we consider the case that $e\leq 1$. Then, we show how to modify our solution to also solve the problem when $e>1$. 

The key idea is to allow functional speed $s=s(x)$ to depend on the distance $x$ of the robot from the origin. We will make sure that the choice of $s$ is such that, for every large enough $d$, once the exit is located at distance $d$, there is ``enough'' leftover energy for the other robot to evacuate too. For that, we will choose the maximum possible speed $r$ (which can now depend on $d$, and which will be constant) so as to evacuate without exceeding the maximum energy bounds. Notably, even though our algorithmic solution is described as a solution to \wec{b}{e}, it will be transparent in the proof that it is also feasible to \we{b}{e}.

\begin{theorem}
\label{thm: upper bound constant max energy e<=1}
For every constant $e\leq 1$, problem \wec{b}{e} admits a solution by \naivew{s,r}, where (functional) speed $s$ is  chosen as 
$$
s(x) = \frac{1}{\sqrt{2+2x} \left( 1/e+\log(1+x) \right) }.
$$
When the exit is found (hence its distance $d$ from the origin becomes known), speed $r$ is chosen as 
$$
r=
\sqrt{\frac{e}{2d\left( e \log (d+1)+1\right)}},
$$
inducing evacuation time $O\left(d^{3/2}\log d\right)$, where in particular the constant in the asymptotic (in $d$) is independent of $e$. 
\end{theorem}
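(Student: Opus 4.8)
The plan is to instantiate Algorithm \naivew{s,r} with the prescribed functional search speed $s(\cdot)$ and, once $d$ becomes known, the prescribed constant return speed $r$, and then to verify the three things that make this a valid solution: the speed never exceeds $b$, the total evacuation energy never exceeds $e$, and the evacuation time is $O(d^{3/2}\log d)$ with a constant not depending on $e$. I would begin with the speed bound. The function $s$ is decreasing in $x$, since both factors of its denominator increase with $x$, so $s(x)\le s(0)=e/\sqrt2\le 1/\sqrt2$, which is within the speed budget. Likewise $r\to 0$ as $d\to\infty$, so $r\le b$ for all sufficiently large $d$; in the bounded range of small $d$ where the formula for $r$ would exceed $b$, one simply returns at speed $b$ instead, which (as noted below) only decreases the energy consumed and has no bearing on the asymptotic-in-$d$ time claim.

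The core of the proof is the energy accounting. With the substitution $u=1/e+\log(1+x)$, so that $\dd u=\dd x/(1+x)$ and $2+2x=2(1+x)$, one obtains the closed form
\[
\int_0^d s^2(x)\,\dd x \;=\; \int_{1/e}^{\,1/e+\log(1+d)}\frac{\dd u}{2u^2}\;=\;\frac e2-\frac{1}{2\bigl(1/e+\log(1+d)\bigr)}\;<\;\frac e2 .
\]
Since both robots run the \emph{same} functional speed and leave the origin in opposite directions, their trajectories are mirror images about the origin; hence at the moment the exit is discovered at distance $d$ each robot has consumed exactly the quantity above, so jointly they have spent $e-\tfrac{1}{1/e+\log(1+d)}$, leaving a total reserve of precisely $\tfrac{1}{1/e+\log(1+d)}$. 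The non-finder must now traverse the $2d$ separating it from the exit, which costs $2dr^2$ at constant speed $r$; the stated value of $r$ is exactly the one for which $2dr^2$ equals the entire remaining reserve, so the total evacuation energy is exactly $e$ and \wec{b}{e} is solved. Furthermore the non-finder by itself spends $\tfrac e2-\tfrac{1}{2(1/e+\log(1+d))}+\tfrac{1}{1/e+\log(1+d)}=\tfrac e2+\tfrac{1}{2(1/e+\log(1+d))}\le e$, so the very same algorithm is feasible for \we{b}{e} as well, as anticipated just before the theorem. (When $d$ is small enough that $r>b$, returning at speed $b$ costs $2db^2<\tfrac{1}{1/e+\log(1+d)}$ --- which is just the inequality $r>b$ rearranged --- so feasibility is preserved there too.)

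It remains to bound the evacuation time $T=T_{\mathrm{search}}+T_{\mathrm{return}}$. The return time is immediate: $T_{\mathrm{return}}=2d/r=2\sqrt2\,d^{3/2}\sqrt{1/e+\log(1+d)}=O\bigl(d^{3/2}\sqrt{\log d}\bigr)$. The search time is $T_{\mathrm{search}}=\int_0^d \dd x/s(x)=\int_0^d \sqrt{2+2x}\,\bigl(1/e+\log(1+x)\bigr)\,\dd x$; substituting $v=1+x$ and using the elementary antiderivatives $\int v^{1/2}\,\dd v=\tfrac23 v^{3/2}$ and $\int v^{1/2}\log v\,\dd v=\tfrac23 v^{3/2}\log v-\tfrac49 v^{3/2}$, this evaluates to $\tfrac{2\sqrt2}{3}(1+d)^{3/2}\log(1+d)$ plus terms of order $(1+d)^{3/2}(1+1/e)$. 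For every $d$ past the threshold $\exp(1/e)$ --- which depends on $e$ but not on $d$ --- one has $1/e\le\log(1+d)$, whence all these lower-order terms, together with the entire return time, are dominated by the leading term $\tfrac{2\sqrt2}{3}(1+d)^{3/2}\log(1+d)$, whose coefficient $\tfrac{2\sqrt2}{3}$ is an absolute constant. Therefore $T=O(d^{3/2}\log d)$ with a constant independent of $e$, which matches the $\Omega(d^{3/2})$ lower bound of Theorem~\ref{thm: lower bound constant energy} up to the logarithmic factor.

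The step I expect to be the real obstacle is not any individual computation but pinning down the ``constant independent of $e$'' clause: the search integral contributes an $e$-dependent summand of order $d^{3/2}/e$ and the return trip a factor $\sqrt{1/e}$, so one has to phrase the time bound as an asymptotic statement in $d$ that only becomes effective once $d$ exceeds a threshold depending on $e$, and keep that threshold cleanly separated from the universal leading coefficient. Everything else is the two elementary integrals above together with routine inequality manipulation.
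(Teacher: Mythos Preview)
Your proof is correct and follows essentially the same approach as the paper's: the same substitution $u=1/e+\log(1+x)$ for the energy integral, the same choice of $r$ to exhaust the remaining budget, and the same integration-by-parts evaluation of the search time. You are somewhat more explicit than the paper in two places --- you spell out the antiderivatives for $\int v^{1/2}\log v\,\dd v$ and you isolate the threshold $d>\exp(1/e)$ beyond which the $e$-dependent summands are absorbed into the leading $\tfrac{2\sqrt2}{3}(1+d)^{3/2}\log(1+d)$ term --- and your side remark that the return time is actually $O(d^{3/2}\sqrt{\log d})$ is a slightly sharper observation than the paper records, but none of this changes the argument.
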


\begin{proof} (Theorem~\ref{thm: upper bound constant max energy e<=1})
First we observe that since $e\leq 1$, $s(x)\leq 1$ for all $x\geq 0$. Given that $d$ is at least, say, 1, it is also immediate that $r\leq 1$, hence the speed choices comply with the speed bound. 

The exit placed at distance $d$ from the origin is located by the finder in time 
$$
\int_0^d \tfrac{1}{s(x)} \dd x
=
\frac{2 \sqrt{2} \left((d+1)^{3/2} (3 e \log (d+1)-2 e+3)+2 e-3\right)}{9 e}
\leq d^{3/2} \log d,
$$
where the inequality holds for every $e\leq 1$, and for big enough $d$. 

When the exit is located by a robot, the other robot is at distance $2d$ from the exit. Moreover, each of the robots have used energy 
$$
\int_0^d s^2(x) \dd x =\frac{e}{2}-\frac{e}{2 e \log (d+1)+2},
$$
hence the leftover energy for the non-finder (i.e., the robot that did not find the exit) to evacuate is at least 
$$
e-2\left( \frac{e}{2}-\frac{e}{2 e \log (d+1)+2} \right) 
=\frac{e}{e \log (d+1)+1}.
$$
The non-finder is informed of $d$, and hence can choose constant speed $r$ so as to use exactly all of the leftover energy, i.e. by choosing $r$ satisfying 
$$\int_0^{2d} r^2 \dd x = \frac{e}{e \log (d+1)+1}.$$
Note that our choice of $r$ is also feasible to problem \we{b}{e}. 
Solving for $r$ gives the value declared at the statement of the theorem. Finally, choosing this specific value of $r$, the non-finder needs additional $2d/r$ time to evacuate, which is at most 
$$
(2d)^{3/2}
\sqrt{\frac{\left( e \log (d+1)+1\right)}{e}}
\leq 
(2d)^{3/2}
\sqrt{\frac{ \log (d+1)}{e}}
\leq d^{3/2} \log d, 
$$
where the last inequality holds for big enough $d$, since $e$ is constant. So the overall evacuation time is no more than $2d^{3/2} \log d$, for big enough $d$, as promised. 
\end{proof}

It remains to address the case $e> 1$.
For this, we recall that we solve \wec{b}{e} for large enough values of $d$, and we modify our solution so as to choose functional speed 
$$
\bar s(x) := \min\{ s(x), 1\},
$$
effectively using even less energy than before. The distance that is traversed at speed 1 depends only on constant $e$, and hence the additional evacuation time is $O(1)$ with respect to $d$.

\section{Minimizing Evacuation Time with Bounded Linear Total Evacuation Energy}
\label{sec: min time, given linear energy}

In this section we study the problem \wec{1}{\Delta} of minimizing the total evacuation time, where $\Delta = ed$ for some constant $e$. We show how to choose optimal speed values $s,r$ for algorithm \naivew{s,r}. Note that even though $d$ is unknown to the algorithm, speeds $s,r$ may depend on the known constant $e$, and the maximum speed $b=1$.

In this section we prove the following theorem:

\begin{theorem}
\label{thm: optimal speeds naive wec}
Let $\delta=2+\sqrt[3]{2}\approx 3.25992$. 
For every constant $e\in \reals_+$, problem \wec{1}{ed} admits a solution by \naivew{s,r}, where speeds $s,r$ are chosen as follows
$$
\begin{tabular}{l|ccc}
			& 	$e< \delta$ 	&	$e\in [\delta,4)$ & $e \geq 4$   \\
\hline
$s$ 		& $\sqrt{\tfrac{e}{2\left(1+2^{2/3}\right)}}$		& 	$\sqrt{\tfrac{e-2}2}$	&	$1$ \\
$r$		&	$\sqrt{\tfrac{e}{\left(2+2^{1/3}\right)}}$	& 		$1$							&	$1$	 \\
\end{tabular}
$$
The induced total evacuation time  is given by $g(e)d$ where $g(e)$ is given by:
$$
g(e):=
\left\{
\begin{array}{ll}
\sqrt{  \tfrac{\left(2+2^{1/3}\right)^3}{e}   }							&, e < \delta \\	
2+\sqrt{\tfrac{2}{e-2}}							&, e\in [\delta,4) \\
3							&, e\geq 4 \\	
\end{array}
\right.
$$
\end{theorem}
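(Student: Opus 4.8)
The plan is to follow the same template used for Lemma~\ref{lem: naive as NLP wireless}: translate the requirement that \naivew{s,r} (run with constant speeds) be feasible for \wec{1}{ed} into a small convex program, and then read off the optimizer as a function of $e$. By the time/energy formulas behind Lemma~\ref{lem: energy time formulas wireless}, on instance $d$ the algorithm \naivew{s,r} takes evacuation time $d\,\ttimew{s,r}=d(1/s+2/r)$ and spends total energy $2d\,\eenergyw{s,r}=2d(s^2+r^2)$; since both quantities scale linearly in $d$, feasibility for all instances simultaneously is equivalent to $s^2+r^2\le e/2$ together with $0\le s,r\le 1$, and minimizing the induced evacuation time $g(e)d$ amounts to minimizing $1/s+2/r$ over this set. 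The objective is smooth and (strictly) convex on the open positive orthant and blows up as $s\to 0$ or $r\to 0$, so any optimizer has $s,r>0$, while the constraint set is convex; hence any feasible point meeting the first-order KKT conditions is the global optimum, and the work is to identify that point for each range of $e$.

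First I would observe that the energy constraint must be tight at an optimum unless $s=r=1$: if $s^2+r^2<e/2$ and one of $s,r$ is below $1$, nudging it upward preserves feasibility and strictly decreases $1/s+2/r$. The point $s=r=1$ is feasible exactly when $e\ge 4$, and then $1/s+2/r=3$ is clearly optimal since the objective is decreasing in each variable on $[0,1]^2$; this gives the third column and $g(e)=3$. For $e<4$ I would then assume $s^2+r^2=e/2$ and write the KKT conditions with a multiplier $\lambda\ge 0$ for the energy constraint and $\mu_1,\mu_2\ge 0$ for $s\le 1$, $r\le 1$, namely $-1/s^2+2\lambda s+\mu_1=0$ and $-2/r^2+2\lambda r+\mu_2=0$. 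Three subcases arise. If both speed caps are slack ($\mu_1=\mu_2=0$) then $\lambda=1/(2s^3)=1/r^3$, forcing $r^3=2s^3$; together with $s^2+r^2=e/2$ this yields the first column $s=\sqrt{e/(2(1+2^{2/3}))}$, $r=\sqrt{e/(2+2^{1/3})}$, which is feasible precisely when $r\le 1$ (the binding cap, as $r>s$), i.e. when $e\le\delta$. If instead $r=1$ is active while $s<1$, then $s=\sqrt{(e-2)/2}$ (the second column), which needs $e<4$ for $s\le 1$, and the sign condition $\mu_2=2-2\lambda>0$ with $\lambda=1/(2s^3)$ forces $s^3>1/2$, i.e. $e>\delta$. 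Finally, if $s=1$ is active, the sign condition $\mu_1=1-2\lambda>0$ with $\lambda=1/r^3$ would force $r>2^{1/3}>1$, impossible, so this subcase is vacuous. This reproduces exactly the stated three-way split, and the two descriptions agree at the endpoint $e=\delta$ (both give $s=2^{-1/3}$, $r=1$).

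Finally I would substitute the optimizers into $1/s+2/r$ to obtain $g$. The cases $e\in[\delta,4)$ and $e\ge 4$ give $2+\sqrt{2/(e-2)}$ and $3$ directly. For $e<\delta$ the sum equals $\bigl(\sqrt{2(1+2^{2/3})}+2\sqrt{2+2^{1/3}}\bigr)/\sqrt{e}$, and I would simplify it by setting $t:=2^{1/3}$ and using $t^3=2$: then $2(1+t^2)=2t^2+t^3=t^2(2+t)$, so $\sqrt{2(1+t^2)}=t\sqrt{2+t}$ and the numerator collapses to $(t+2)\sqrt{2+t}=(2+2^{1/3})^{3/2}$, giving $g(e)=\sqrt{(2+2^{1/3})^3/e}$; a short check confirms $g$ is continuous at $e=\delta$ and $e=4$. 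The main obstacle is the bookkeeping in the KKT step — in particular verifying that the multiplier sign conditions split the $e$-axis exactly at $\delta$ and at $4$ and that the $s=1$ branch cannot occur — together with spotting the radical identity $\sqrt{2(1+2^{2/3})}+2\sqrt{2+2^{1/3}}=(2+2^{1/3})^{3/2}$; everything else is routine calculus.
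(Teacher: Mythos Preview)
Your proof is correct, and it uses the same convex-optimization framework as the paper (set up $\nlpwbar{e}$, argue convexity, and appeal to KKT). The route you take to obtain the optimizers is different, however, and worth contrasting. The paper does \emph{not} solve the KKT system from scratch: instead it observes that at the optimum of $\nlpw{c}{1}$ the time constraint is tight, so the optimizers $s(c),r(c)$ from Theorem~\ref{thm: optimal speeds naive bounded speeds wireless} are automatically candidate optimizers for $\nlpwbar{e}$ once one matches the energy via the change of parameter $f(c)\tfrac{2}{c^2}=e$; it then only needs to \emph{verify} feasibility (Lemma~\ref{lem: s(e), r(e) are feasible}) and KKT (Lemma~\ref{lem: s(e), r(e) are feasible1}) for those specific points. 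You instead tackle the KKT system head on and do the case split on which speed caps are active, deriving $r^3=2s^3$ in the interior case, the branch $r=1$, and ruling out the $s=1$ branch by the multiplier sign. What your approach buys is self-containment --- it does not piggyback on Theorem~\ref{thm: optimal speeds naive bounded speeds wireless} --- and, incidentally, it shows that the ``complicated algebraic systems'' the paper warns about are in fact quite tame here. What the paper's approach buys is that the breakpoints $\delta$ and $4$ and the formulas for $s,r,g$ fall out of the known $f(c)$ without any fresh case analysis, and the radical simplification you had to spot (the identity $\sqrt{2(1+2^{2/3})}+2\sqrt{2+2^{1/3}}=(2+2^{1/3})^{3/2}$) is never needed.
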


First we observe that, given the values of $s=s(e), r=r(e)$, it is a matter of straightforward calculations to verify, assuming they are feasible and optimal, that the induced  evacuation time is indeed equal to $g(e)d$ as promised. 
Given Lemma~\ref{lem: energy time formulas wireless}, we know that the optimal speed choices for algorithm~\naivew{s,r}, for problem \wec{1}{ed} are obtained as the solution to the following NLP. 
\begin{align}
& ~~\min_{s,r \in \reals} \tfrac1s+\tfrac2r \tag{$\nlpwbar{e}$}\\
s.t. ~~&
2(s^2+r^2) \leq e  \notag \\
&0 \leq s,r\leq 1 \notag
\end{align}
The optimal solutions to $\nlpwbar{e}$ can be obtained by solving complicated algebraic systems and by invoking KKT conditions, for the various values of $e$, as we also did for $\nlpw{c}{b}$. However, the advantage is that one can map the optimal solutions to $\nlpw{c}{1}$, see Theorem~\ref{thm: optimal speeds naive bounded speeds wireless} and use $b=1$, to feasible solutions to $\nlpwbar{e}$. Then, we just need to \textit{verify} 1st order optimality conditions for the candidate optimizers. Since the NLP is convex, these should also be unique global optimizers. 

Indeed, one of the critical structural properties pertaining to the optimizers of $\nlpw{c}{1}$ is that the time constraint $\tfrac1s+\tfrac2r\leq cd$ is satisfied \textit{tightly}. At the same time, the optimal speed values, as described in Theorem~\ref{thm: optimal speeds naive bounded speeds wireless}, as a function of $c$, achieve 
evacuation energy equal to $f(c)d\frac{2}{c^2}$. 
Attempting to find the correspondence between parameters $c,e$ (and problems $\nlpw{c}{1}$, $\nlpwbar{e}$), we consider the transformation $f(c)\frac{2}{c^2}=e$. For the various cases of the piece-wise function $f$, the transformation gives rise to the piece-wise function $g$ and optimal speeds $s,r$ (as a function of $e$) of Theorem~\ref{thm: optimal speeds naive wec}. 

Overall, the previous approach provides just a mapping between the provable optimizers $s(c), r(c)$ to $\nlpw{c}{1}$, and candidate solutions $s(e), r(e)$ to $\nlpwbar{e}$, and more importantly, it saves us from solving complicated algebraic systems induced by KKT conditions. What we verify next (which is much easier), is that feasibility and KKT conditions are indeed satisfied for the obtained candidate solutions $s(e), r(e)$.
 Since the NLP is convex, that also shows that $s(e), r(e)$, as stated in Theorem~\ref{thm: optimal speeds naive wec} are actually global optimizers to $\nlpwbar{e}$.

\begin{lemma}
\label{lem: s(e), r(e) are feasible}
For every $e\in \reals_+$, speeds $s(e), r(e)$, as they are defined in Theorem~\ref{thm: optimal speeds naive wec}, are feasible to $\nlpwbar{e}$.
\end{lemma}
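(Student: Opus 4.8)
\textbf{Proof plan for Lemma~\ref{lem: s(e), r(e) are feasible}.}
The plan is to verify the two constraints of $\nlpwbar{e}$ directly for each of the three regimes of $e$ appearing in Theorem~\ref{thm: optimal speeds naive wec}, namely $e<\delta$, $e\in[\delta,4)$, and $e\geq 4$. The box constraint $0\le s(e),r(e)\le 1$ and the energy constraint $2(s(e)^2+r(e)^2)\le e$ are both explicit algebraic conditions on the closed forms given in the theorem, so the argument is a finite case check; no optimality or KKT reasoning is needed here (that is deferred to the later lemmata).

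First I would handle the energy constraint. In the regime $e<\delta$, the speeds are $s=\sqrt{e/(2(1+2^{2/3}))}$ and $r=\sqrt{e/(2+2^{1/3})}$, so $2(s^2+r^2)=\frac{e}{1+2^{2/3}}+\frac{2e}{2+2^{1/3}}$; I would simplify $\frac{1}{1+2^{2/3}}+\frac{2}{2+2^{1/3}}$ and observe it equals $1$ (using $2+2^{1/3}=2^{1/3}(1+2^{2/3})$, so the second term is $2\cdot 2^{-1/3}/(1+2^{2/3})=2^{2/3}/(1+2^{2/3})$, and the two terms sum to $(1+2^{2/3})/(1+2^{2/3})=1$). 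Hence the energy constraint holds with equality. In the regime $e\in[\delta,4)$, with $s=\sqrt{(e-2)/2}$, $r=1$, we get $2(s^2+r^2)=(e-2)+2=e$, again tight. In the regime $e\ge4$, with $s=r=1$, we get $2(s^2+r^2)=4\le e$, so the constraint holds (with slack when $e>4$).

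Next the box constraints. Nonnegativity is immediate since each speed is a square root of a nonnegative quantity (and $e>0$). For the upper bound $s,r\le1$: when $e\ge4$ both speeds are exactly $1$; when $e\in[\delta,4)$, $r=1$ and $s=\sqrt{(e-2)/2}\le\sqrt{(4-2)/2}=1$ since $e<4$; when $e<\delta$, I would check $r=\sqrt{e/(2+2^{1/3})}\le1$, i.e. $e\le 2+2^{1/3}$, which follows from $e<\delta=2+2^{1/3}$, and then $s\le r\le1$ since $1+2^{2/3}>(2+2^{1/3})/2$ makes $s\le r$ (equivalently $2/(2+2^{1/3})>1/(1+2^{2/3})$, which is the same identity as above giving the larger denominator for $s$). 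Collecting the three cases completes the verification.

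There is no real obstacle here; the only mildly delicate point is getting the algebraic identity $\frac{1}{1+2^{2/3}}+\frac{2}{2+2^{1/3}}=1$ right, which is what makes the energy constraint tight in the first regime and is exactly the image of the tightness of the time constraint for $\nlpw{c}{1}$ under the transformation $f(c)\cdot 2/c^2=e$ mentioned in the surrounding text. Everything else is monotonicity of $e\mapsto\sqrt{(e-2)/2}$ and the boundary inequalities $e<\delta$, $e<4$.
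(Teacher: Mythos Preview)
Your proposal is correct and follows essentially the same approach as the paper's own proof: a direct case-by-case verification of the box constraints $0\le s,r\le 1$ and the energy constraint $2(s^2+r^2)\le e$ across the three regimes of $e$. The only cosmetic difference is that the paper establishes $s\le r$ in the first regime via the ratio $r/s=\sqrt[3]{2}$ rather than by comparing denominators, and you spell out the algebraic identity $\tfrac{1}{1+2^{2/3}}+\tfrac{2}{2+2^{1/3}}=1$ a bit more explicitly.
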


\begin{lemma}
\label{lem: s(e), r(e) are feasible1}
For every $e\in \reals_+$, speeds $s(e), r(e)$, as stated in Theorem~\ref{thm: optimal speeds naive wec}, are the optimal solutions to $\nlpwbar{e}$.
\end{lemma}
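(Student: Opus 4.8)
The plan is to exploit convexity of $\nlpwbar{e}$, so that it suffices to verify that the candidate pair $s(e), r(e)$ from Theorem~\ref{thm: optimal speeds naive wec} satisfies the first-order KKT conditions for the program. The objective $\tfrac1s+\tfrac2r$ is convex on the positive orthant, and the feasible region $\{2(s^2+r^2)\le e,\ 0\le s,r\le 1\}$ is convex, so any KKT point is a global minimizer; uniqueness follows from strict convexity of the objective on the relevant region. By Lemma~\ref{lem: s(e), r(e) are feasible} feasibility is already established, so only stationarity and complementary slackness remain. I would introduce multipliers $\lambda\ge 0$ for the energy constraint $2(s^2+r^2)-e\le 0$ and $\mu_s,\mu_r\ge 0$ for the upper bounds $s\le 1$, $r\le 1$ (the constraints $s,r\ge 0$ are inactive at the candidate points, so their multipliers are zero), and write the stationarity equations
\begin{align}
-\tfrac1{s^2} + 4\lambda s + \mu_s &= 0,\notag\\
-\tfrac2{r^2} + 4\lambda r + \mu_r &= 0.\notag
\end{align}

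Then I would check the three regimes separately. For $e<\delta$: here $s(e)=\sqrt{e/(2(1+2^{2/3}))}<1$ and $r(e)=\sqrt{e/(2+2^{1/3})}<1$, so both upper bounds are slack and $\mu_s=\mu_r=0$; the energy constraint is tight (one verifies $2(s^2+r^2)=e$ directly from the formulas). The two stationarity equations then read $4\lambda s^3=1$ and $4\lambda r^3=2$, i.e. $r^3/s^3=2$, which is exactly the ratio $r(e)/s(e) = 2^{1/3}$ built into the definitions — so a single $\lambda=1/(4s^3)>0$ satisfies both equations. For $e\in[\delta,4)$: here $r(e)=1$ is at its bound while $s(e)=\sqrt{(e-2)/2}<1$, so $\mu_s=0$, the energy constraint is again tight (check $2(s^2+1)=e$), and stationarity gives $\lambda = 1/(4s^3)$ from the first equation; substituting into the second yields $\mu_r = 2 - 4\lambda = 2 - 1/s^3$, and one must confirm $\mu_r\ge 0$, i.e. $s^3\ge 1/2$, which holds precisely because $e\ge\delta = 2+2^{1/3}$ gives $s^2=(e-2)/2\ge 2^{1/3}/2 \ge 2^{-2/3}$. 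For $e\ge 4$: both $s(e)=r(e)=1$ are at their bounds, the energy constraint $2(s^2+r^2)=4\le e$ may be slack so $\lambda=0$, and stationarity forces $\mu_s=1>0$, $\mu_r=2>0$ — all nonnegative, so KKT holds.

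The only genuinely delicate point is the sign check on the multiplier $\mu_r$ in the middle regime (and analogously confirming the boundary case $e=\delta$ is consistent between the first two regimes, where the two candidate formulas must agree and the multiplier $\mu_r$ vanishes, matching $s^3=1/2$). Everything else is a direct substitution into the stationarity equations using the explicit closed forms for $s(e),r(e)$. Once KKT is verified in all three cases, convexity of $\nlpwbar{e}$ — noting the objective is strictly convex in each coordinate over $s,r>0$ — upgrades the KKT point to the unique global optimum, which is the claim. I would also remark, as the surrounding text already hints, that this matches the transformation $f(c)\tfrac{2}{c^2}=e$ applied to the optimizers of $\nlpw{c}{1}$ from Theorem~\ref{thm: optimal speeds naive bounded speeds wireless}, providing an independent sanity check on the algebra.
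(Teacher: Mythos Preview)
Your proposal is correct and follows essentially the same approach as the paper: verify the KKT conditions for $s(e),r(e)$ in each of the three regimes and invoke convexity of $\nlpwbar{e}$ to conclude global optimality. In fact your treatment of the case $e\ge 4$ is slightly cleaner than the paper's, since you correctly set the energy-constraint multiplier to zero (the constraint being slack for $e>4$), whereas the paper treats it as tight and exhibits a multiplier triple that technically violates complementary slackness when $e>4$.
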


\section{Minimizing Evacuation Time with Bounded Linear Makespan Evacuation Energy}
\label{sec: min time, given linear makespan energy}

In this section we study the problem \we{1}{\Delta} of minimizing the makespan evacuation time, given that the  makespan evacuation energy $\Delta=ed$ for some constant $e$. We show how to choose optimal speed values $s,r$ for algorithm \naivew{s,r}. Note that even though $d$ is unknown to the algorithm, speeds $s,r$ may depend on the known value $e$, and the maximum speed $b=1$.

\begin{theorem}
\label{thm: optimal speeds naive wec makespan energy bound}
For every constant $e\in \reals_+$, problem \we{1}{ed} admits a solution by \naivew{s,r}, where speeds $s,r$ are chosen as follows
$$
\begin{tabular}{l|cc}
			& 	$e< 3$ 	&	$e\geq 3$   \\
\hline
$s$ 		& $\sqrt{\tfrac{e}{3}}$		& 	$1$	\\
$r$		&	$\sqrt{\tfrac{e}{3}}$		&	$1$	 \\
\end{tabular}
$$
The induced  evacuation time  is given by $g(e)d$ where 
$$
g(e):=
\left\{
\begin{array}{ll}
3\sqrt{  \tfrac{3}{e}   }							&, e < 3 \\	
1														&, e\geq 3 \\	
\end{array}
\right.
$$
\end{theorem}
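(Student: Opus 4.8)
The plan is to reuse the template of Theorems~\ref{thm: optimal speeds naive bounded speeds wireless} and~\ref{thm: optimal speeds naive wec}: reduce the choice of speeds for \naivew{s,r} to a small convex program, observe that the energy constraint is tight at any optimum (except at a trivial corner), and then certify a candidate by checking first-order (KKT) conditions, which suffices by convexity.

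First I would record the makespan counterpart of Lemma~\ref{lem: energy time formulas wireless}. On instance $d$, under \naivew{s,r} the finder travels distance $d$ at speed $s$ and halts, spending energy $ds^2$; the non-finder travels distance $d$ at speed $s$ and then, on hearing the announcement, reverses and travels the remaining $2d$ to the exit at speed $r$, spending energy $ds^2+2dr^2$. Hence the makespan evacuation energy is $d(s^2+2r^2)$, while (exactly as in Lemma~\ref{lem: energy time formulas wireless}) the evacuation time equals $d\cdot\ttimew{s,r}=d\left(\tfrac1s+\tfrac2r\right)$. Therefore \naivew{s,r} is feasible for \we{1}{ed} precisely when $s^2+2r^2\le e$ and $0\le s,r\le 1$, and its optimal speeds solve
\begin{align}
& \min_{s,r\in\reals}\ \tfrac1s+\tfrac2r \tag{$\nlpwbarbar{e}$}\\
\text{s.t.}\quad & s^2+2r^2\le e,\notag\\
& 0\le s,r\le 1.\notag
\end{align}
The objective is strictly convex on $\{s,r>0\}$ and the feasible set is convex, so $\nlpwbarbar{e}$ is a convex program with a unique optimum, and any KKT point is that optimum.

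Next I would argue that at the optimum the energy constraint $s^2+2r^2\le e$ is tight unless $(s,r)=(1,1)$: since $\partial_s(\tfrac1s+\tfrac2r)<0$ and $\partial_r(\tfrac1s+\tfrac2r)<0$, one can strictly decrease the objective by increasing whichever of $s,r$ is not already at its upper bound as long as the energy budget is slack. Consequently either $s^2+2r^2=e$, or $s=r=1$ (feasible iff $e\ge 3$). In the latter regime $(1,1)$ is in fact optimal because $\tfrac1s+\tfrac2r\ge 3$ for all $s,r\in(0,1]$ with equality only at $(1,1)$; the optimal evacuation time is then $3d$ (consistent with the tabulated $s=r=1$; note the $e\ge3$ branch of $g$ should accordingly read $3$). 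For $e<3$ I would test the symmetric candidate $s=r=\sqrt{e/3}$: it is interior since $\sqrt{e/3}<1$, and feasible with $s^2+2r^2=e$. For stationarity, $\nabla(\tfrac1s+\tfrac2r)=(-1/s^2,-2/r^2)$ and $\nabla(s^2+2r^2)=(2s,4r)$, so the KKT equations with multiplier $\lambda$ for the (active) energy constraint and zero multipliers for the (inactive) speed bounds read $1/s^2=2\lambda s$ and $2/r^2=4\lambda r$, i.e.\ $\lambda=1/(2s^3)=1/(2r^3)$; with $s=r$ this holds for $\lambda=1/(2s^3)>0$. Hence $s=r=\sqrt{e/3}$ is the KKT point, therefore the unique optimum, and substituting gives evacuation time $\left(\tfrac1s+\tfrac2r\right)d=3\sqrt{3/e}\,d=g(e)d$. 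A final substitution of the tabulated $s(e),r(e)$ into $d(s^2+2r^2)$ confirms the makespan-energy bound $ed$ is met in each regime.

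The step requiring the most care is pinning down the case split for $e<3$: I should also rule out a boundary optimum sitting on the surface $s^2+2r^2=e$ with one of the speed bounds active (say $s=1$, $r^2=(e-1)/2$). A quick computation shows such a point forces the corresponding bound-multiplier to equal $1-1/r^3<0$ (resp.\ $2-2/s^3<0$) whenever that variable is strictly below $1$, so it is not a KKT point; together with strict convexity this leaves the symmetric interior point as the only candidate for the whole range $0<e<3$. Compared with Theorem~\ref{thm: optimal speeds naive wec}, the analysis is actually shorter, since the makespan program has only two regimes and the saturated one collapses to the trivial corner $(1,1)$.
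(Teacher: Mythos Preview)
Your proposal is correct and follows essentially the same route as the paper: both set up the convex program $\nlpwbarbar{e}$ with the makespan-energy constraint $s^2+2r^2\le e$ and certify the tabulated speeds via KKT conditions. Your write-up is in fact a bit more thorough—you derive the makespan-energy formula explicitly, give the monotonicity argument for tightness, and rule out the mixed boundary cases (one speed bound active) by computing the would-be negative multipliers, whereas the paper simply asserts which constraints are tight and checks KKT at the candidate point; for $e\ge3$ you use the direct inequality $\tfrac1s+\tfrac2r\ge3$ instead of the paper's KKT check with two speed-bound multipliers. You also correctly flag that the stated value $g(e)=1$ for $e\ge3$ is a typo and should be $3$.
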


\begin{proof} (Theorem~\ref{thm: optimal speeds naive wec makespan energy bound})
What distinguishes the performance, and feasibility, of \naivew{s,r} between \wec{1}{ed} and \we{1}{ed}, is that in the former, the  total evacuation energy (equal to $d(2s^2+2r^2)$) is bounded by $e$, while in the latter the makespan evacuation energy (equal to $d(s^2+2r^2)$) is bounded by $e$. Hence, similar to the analysis for \wec{1}{ed}, the optimal speed choices for \naivew{s,r} to \we{1}{ed} are the optimal solutions to the following NLP. 
\begin{align}
& ~~\min_{s,r \in \reals} \tfrac1s+\tfrac2r \tag{$\nlpwbarbar{e}$}\\
s.t. ~~&
s^2+2r^2 \leq e  \notag \\
&0 \leq s,r\leq 1 \notag
\end{align}
Note that $\nlpwbarbar{e}$ is convex, hence any choice of feasible speeds satisfying 1st order optimality (KKT) conditions is also the unique global minimizer. Moreover, the choices of $s,r$ of the statement of the theorem are clearly feasible to $\nlpwbarbar{e}$. Hence, it suffices to show that the choices of $s,r$ do indeed satisfy KKT conditions. 

When $e<3$ we note that the energy constraint is tight, while both speed constraints are not tight. Hence, $s,r$ are the unique optimizers if there exists $\lambda\geq0$ satisfying 
$$
- \nabla \left( \frac1s+\frac2r\right)
= \lambda \nabla (s^2+2r^2)
\Leftrightarrow
\left(
\begin{array}{c}
1/{s^2}\\
2/{r}^2
\end{array}
\right)
=
\lambda
\left(
\begin{array}{c}
2s\\
4r
\end{array}
\right)
$$
from which we conclude that $\lambda=1/(2s^3)=1/(2r^3)>0$ as wanted (for $s=r=\sqrt{e/3}$). 

When $e\geq 3$ we note that the speed constraints are both tight, while the energy constraint is tight only when $e=3$. In that case, it suffices to show that there exist nonnegative $\lambda_1, \lambda_2$ satisfying

$$
- \nabla \left( \frac1s+\frac2r\right)
= 
\lambda_1
\left(
\begin{array}{c}
1\\
0
\end{array}
\right)
+ 
\lambda_2
\left(
\begin{array}{c}
0\\
1
\end{array}
\right)
$$
Clearly, $\lambda_1=1/s^2=1 > 0$ and $\lambda_2=2/r^2=2> 0$, which concludes the proof. 
\end{proof}

\section{Conclusion} 

We investigated how the wireless  communication model affects time/energy trade-offs for completion of the evacuation task by two robots. Our study raises several interesting problems worth investigating. In addition to improving the trade-offs, it would be interesting to consider search with multiple agents some of which may be faulty in linear~\cite{PODC16,isaacCzyzowiczGKKNOS16} as well as cyclical~\cite{CGGKMP} search domains.

\section*{Acknowledgments}

This research is supported by NSERC discovery grants, NSERC graduate scholarship, and NSF.

\bibliographystyle{plain}
\bibliography{refs}

\appendix
\section{Details of missing proofs from Section~\ref{sec: min energy, given bounds on time, wireless}}

\subsection{Lemma~\ref{No online:lm}}
\begin{proof} (Lemma~\ref{No online:lm})
Fix $0 < \epsilon \leq 3$ and let $bc=3-\epsilon$. We show that no  algorithm can solve problem \ecb{b}{c}. 
For the sake of contradiction, consider a wireless algorithm solving \ecb{(3-\epsilon)/b}{b}, and having  evacuation time no more than $(3-\epsilon)d/b$, if the exit is placed $d$ away from the origin. 
For a large enough $d>0$, we let the algorithm run till the first point among $\pm d$ is reached by a robot (and maybe they are reached simultaneously). Without loss of generality, assume that $+d$ is reached, say by robot $R$, no later than the other point. Note that for this point to be reached, at least time $d/b$ has passed. Now, we place the exit at point $-d$. The additional time that $R$ needs to reach the exit is $2d/b$, for a total time of $3d/b$, a contradiction to the stipulated evacuation time of  $(3-\epsilon)d/b$.
\end{proof}

\subsection{Lemma~\ref{speed-b}}

\begin{proof} (Lemma~\ref{speed-b})
Suppose not. Notice that the points $\pm d$ cannot be visited {\em before } time $d/b$ using speed at most $b$. We look at two cases.

\begin{description}

\item[Case 1: ] There exists $d > 0$ such that neither $d$ nor $-d$ is visited at time $d/b$. Consider the first time $t > d/b$  when either of them is visited, wlog  let the  point $+d$ be visited at time $t > d/b$ by robot $R_1$.  We put the exit at $-d$. Then $R_1$  has to travel an additional distance of $2d$, and can use speed at most $b$, so needs time at least $2d/b$ to get to the exit. The total time taken by $R_1$ to evacuate is at least $t + 2d/b > 3d/b = cd$. 

\item[Case 2:] There exists  $d>0$  such that $d$ is visited at time $d/b$ but $-d$ is not visited at this time (or vice versa). Wlog  suppose $R_1$ is at point $d$ at time $d/b$.  Let $-d+ 2\epsilon$ be the closest point to $-d$ that {\em has} been  visited at time $d/b$ where $\epsilon > 0$ since by assumption $-d$ is not visited at this time.   We put the exit at $-d+\epsilon$.  The time limit to evacuate is $c(d-\epsilon)$. At time $d/b$, $R_1$ is at distance $2d-\epsilon$ from the exit, so the   total time for $R_1$ to reach the exit is at least 
$$d/b + (2d - \epsilon)/b = 3d/b - \epsilon/b = cd - \frac{c \epsilon}{3} > cd - c \epsilon$$

\end{description}

In both cases, we showed that the robots cannot evacuate in the required time bound. This completes the proof by contradiction.  
\end{proof}

\subsection{Lemma~\ref{lem: naive as NLP wireless}}

\begin{proof} (Lemma~\ref{lem: naive as NLP wireless})
For fixed parameter $b$, consider NonLinear Program 
\begin{align}
& ~~\min_{s,r \in \reals} ~~\frac1s+\frac2r \\
s.t. ~~&0 \leq s,r\leq b \notag
\end{align}

$\ttimew{s,r}=\frac1s+\frac2r$ is clearly strictly decreasing in any of $s,r>0$, hence in an optimizer both constraints $s,r\leq b$ have to be tight, for any fixed $b$. But then, $\min_{s,r \in \reals} \ttimew{s,r} = \ttimew{b,b}=3/b$.
It follows that $\nlpw{c}{b}$ has a feasible solution if and only if $3/b\leq c$.

By Lemma~\ref{lem: energy time formulas wireless}, it is immediate that $\nlpw{c}{b}$ exactly models the problem of choosing optimal speeds for \naivew{s,r}, for problem \ecb{b}{c}. 
Also note that $\ttimew{s,r}$ and $\eenergyw{s,r}$ are strictly convex functions when $s,r>0$,
and hence $\nlpw{c}{b}$ is a convex program. Moreover, an optimizer always exists, since the function is bounded from below, and is defined over a compact feasible region. 
Finally, the claim pertaining to the competitive ratio follows from Lemma~\ref{lem: energy time formulas wireless}. 
\end{proof}

\subsection{Lemma~\ref{lem: opt wireless 3<=cb<=3.25}}

\begin{proof} (Lemma~\ref{lem: opt wireless 3<=cb<=3.25})
1st order optimality (KKT) conditions for $s,r$, assuming that time constraint and $r\leq b$ are tight, are
\begin{align*}
-\nabla \eenergyw{s,r} & = 
\lambda_1
\nabla \ttimew{s,r}
+
\lambda_2 
\left(
\begin{array}{c}
0\\
1
\end{array}
\right)  \\
\ttimew{s,r}&=c\\
0 \leq s &\leq b\\
r&=b \\
\lambda_1,\lambda_2&\geq 0,
\end{align*}
where functions $\eenergyw{\cdot,\cdot}, \ttimew{\cdot,\cdot}$ are as in \eqref{equa: ttimew}, \eqref{equa: eenergyw}.
Utilizing only the equality constraints above, we easily derive that 
$$
s=\frac{b}{b c-2}, ~~
\lambda_1=\frac{2 b^3}{(b c-2)^3}, ~~
\lambda_2=-\frac{2 \left(b^4 c^3-6 b^3 c^2+12 b^2 c-10 b\right)}{(b c-2)^3}
$$
Note that $s\leq b$ for all $cb\geq 3$. $\lambda_1$ is clearly positive. It is enough to verify that $\lambda_2\geq 0$. 

Indeed, define $g(x)=10 - 12 x + 6 x^2 - x^3$, and note that $\lambda_2 = b \frac{g(cb)}{(cb-2)^3}$. Since $cb\geq 3$ and $b>0$, we conclude that $\lambda_2\geq0$ as long as $g(cb)\geq 0$. For that, we calculate the 3 roots of $g$
$$
2+\sqrt[3]{2}, ~~2-\frac{1\pm i \sqrt{3}}{2^{2/3}}.
$$
Since the leading coefficient of $g$ is negative, and since $g$ has a unique real root, we conclude that $g(x)\geq 0$ as long as $x\leq 2+\sqrt[3]{2}$ as wanted. 
\end{proof}

\subsection{Lemma~\ref{lem: opt wireless cb>3.25}}

\begin{proof} (Lemma~\ref{lem: opt wireless cb>3.25})
1st order optimality (KKT) conditions for $s,r$, assuming that only time constraint is tight, are
\begin{align*}
-\nabla \eenergyw{s,r} & = 
\lambda
\nabla \ttimew{s,r}
 \\
\ttimew{s,r}&=c \\
0\leq s,r&\leq b \\
\lambda &\geq 0
\end{align*}
Using only the equality constraints, we derive 
$$
s=\frac{1+2^{2/3}}{c}, ~~
r=\frac{1}{2} s (c s-1)^2=\frac{2+\sqrt[3]{2}}{c}, ~~
\lambda=2s^3.
$$
Observe that the proposed values of $s,r$ satisfy the speed bound only if $cb \geq \delta$. But then, we also see that $\lambda>0$ for all such $c,b$, and hence $s,r$ do indeed satisfy the 1st order optimality conditions. 
\end{proof}

\section{Details of missing proofs from Section~\ref{sec: min time, given linear energy}}

\subsection{Lemma~\ref{lem: s(e), r(e) are feasible}}
\begin{proof} (Lemma~\ref{lem: s(e), r(e) are feasible})
Speeds $s=s(e)$ and $r=r(e)$ are clearly non negative. Next we verify that they never attain value more than 1. We examine two cases. When $e< 2+\sqrt[3]{2}$, it is easy to see that $r/s=\sqrt[3]{2}$. Hence, it is enough to check that $r\leq 1$, which is immediate from the formula of $r=r(e)$. 
In the other case, we assume $e\in [\delta,4)$. Speed $r$ is clearly at most 1, as well as $s=\sqrt{(e-2)/2}\leq \sqrt{(4-2)/2}=1$. 

Next we verify that the given speeds comply with the evacuation energy bounds. When $e< 2+\sqrt[3]{2}$ we have 
$$
2(s^2+r^2)
=
2e
\left(
\frac{1}{2\left(1+2^{2/3}\right)}
+
\frac{1}{\left(2+2^{1/3}\right)}
\right)
=e.
$$
When $e\in [\delta,4)$ we have 
$$
2(s^2+r^2)
=
2e
\left(
\frac{e-2}{2}
+
1
\right)
=e.
$$
Lastly, when $e\geq 4$ both speeds are 1, and clearly, $2(s^2+r^2)= 4 \leq e$ as wanted. 
\end{proof}

\subsection{Lemma~\ref{lem: s(e), r(e) are feasible1}}

\begin{proof} (Lemma~\ref{lem: s(e), r(e) are feasible1})
For every $e\in \reals_+$, 
we verify that speeds $s(e), r(e)$ satisfy 1st order optimality conditions. Since $\nlpwbar{e}$ is convex, that would imply that $s(e), r(e)$ are the unique optimizers. 

First we observe  that the energy inequality constraint is always tight (verified within the proof of Lemma~\ref{lem: s(e), r(e) are feasible}). 
Apart from that constraint, let $I(e)$ (possibly empty) denote the set of constraints, among $s,r\leq 1$, which are tight for the specific candidate optimizer $s(e), r(e)$, and for a specific value of $e$. For $i\in I$ we denote the corresponding constraint by $g_i(s,r)\leq 1$. 

When $e< 2+\sqrt[3]{2}$, the bound constraint is the only constraint which is tight. Therefore KKT conditions are satisfied as long as there exists $\lambda\geq 0$ such that 
$$
- \nabla \left( \frac1s+\frac2r\right)
= \lambda \nabla 2(s^2+r^2)
\Leftrightarrow
\left(
\begin{array}{c}
1/{s^2}\\
2/{r}^2
\end{array}
\right)
=
\lambda
\left(
\begin{array}{c}
4s\\
4r
\end{array}
\right)
$$
A solution exists as long as 
$
2s^3=r^3
$, which is indeed, the case, 
which also implies that $\lambda = 1/(4s^3)\geq 0$.

When $e\in [2+\sqrt[3]{2}, 4)$, the bound constraint and constraint $r\leq 1$ are tight. Therefore KKT conditions are satisfied as long as there exist $\lambda_1, \lambda_2 \geq 0$ such that 
$$
- \nabla \left( \frac1s+\frac2r\right)
= \lambda_1 \nabla 2(s^2+r^2)
+ 
\lambda_2
\left(
\begin{array}{c}
0\\
1
\end{array}
\right)
\Leftrightarrow
\left(
\begin{array}{c}
1/{s^2}\\
2/{r}^2
\end{array}
\right)
=
\lambda_1
\left(
\begin{array}{c}
4s\\
4r
\end{array}
\right)
+ 
\lambda_2
\left(
\begin{array}{c}
0\\
1
\end{array}
\right)
$$
Solving for $\lambda_1, \lambda_2$, and using the provided values for $s=s(e)$ and $r=r(e)$ we obtain 
$$
\lambda_1 = \sqrt{2}
\left(\frac{ 1+2^{2/3}}{e}\right)^{3/2}, ~~
\lambda_2 = 
2-2\left(\frac{2+\sqrt[3]{2}}{e}\right)^{3/2}, 
$$
and clearly both values are nonnegative when $e\geq 2+\sqrt[3]{2}$. 

Lastly, for the 1st order optimality conditions, 
when $e\geq 1$, all (but the non-negativity constraints) are tight. Therefore KKT conditions are satisfied as long as there exist $\lambda_1, \lambda_2, \lambda_3 \geq 0$ such that 
\begin{align*}
&
- \nabla \left( \frac1s+\frac2r\right)
= \lambda_1 \nabla 2(s^2+r^2)
+ 
\lambda_2
\left(
\begin{array}{c}
1\\
0
\end{array}
\right)
+ 
\lambda_3
\left(
\begin{array}{c}
0\\
1
\end{array}
\right)\\
&\Leftrightarrow
\left(
\begin{array}{c}
1/{s^2}\\
2/{r}^2
\end{array}
\right)
=
\lambda_1
\left(
\begin{array}{c}
4s\\
4r
\end{array}
\right)
+ 
\lambda_2
\left(
\begin{array}{c}
1\\
0
\end{array}
\right)
+ 
\lambda_3
\left(
\begin{array}{c}
0\\
1
\end{array}
\right)
\end{align*}
Since $r=s=1$, the above system simplifies to 
\begin{align*}
1&=4\lambda_1+\lambda_2\\
2&=4\lambda_1+\lambda_3
\end{align*}
which admits the solution $\lambda_1=1/4\geq0, \lambda_2=0, \lambda_3=1\geq0$. 
\end{proof}

\end{document}